\theoremstyle{definition}
\newtheorem{de}{Definition}[section]
\newtheorem{ex}{Example}[section]
\newtheorem{fa}{Fact}[section]
\newcommand{\F}{\mathbb{F}}
\newcommand{\Z}{\mathbb{Z}}
\newcommand{\C}{\mathcal{C}}
\newcommand{\D}{\mathcal{D}}
\title{A quaternary analogue of Tang-Ding codes\thanks{This research is supported by the National Natural Science Foundation of China (12071001) and the
National Research Foundation of Korea (NRF) Grant funded by the Korean government (NRF-2019R1A2C1088676).}}
\author{ Minjia Shi$^\ast$\thanks{smjwcl.good@163.com},
	Sihui Tao \thanks{taosihui2022@163.com}, Jon-Lark Kim\thanks{jlkim@sogang.ac.kr},
	Patrick Sol\'e\thanks{{sole@enst.fr}}
	\thanks{Minjia Shi and Sihui Tao are with the Key Laboratory of Intelligent Computing Signal
		Processing, Ministry of Education, School of Mathematical Sciences, Anhui
		University, Hefei 230601, China; State Key Laboratory of integrated Service Networks, Xidian University, Xi'an,
		710071, China. Jon-Lark Kim is with Department of Mathematics, Sogang University, Seoul, South Korea. Patrick Sol\'e is with Aix Marseille Univ, CNRS, Centrale Marseille, I2M, Marseille, France. }}
\date{}
\newcommand{\Rmnum}[1]{\expandafter\@slowromancap\romannumeral #1@}
\begin{document}
	\maketitle
	\begin{abstract}
			In a recent paper, Tang and Ding introduced a class of binary cyclic codes of rate close to one half with a designed lower bound on their minimum distance. The definition involves the base $2$ expansion of the integers in their defining set. In this paper we propose an analogue for quaternary codes. In addition, the performances of the subfield subcode and of the trace code (two binary cyclic codes) are investigated.
		
	\end{abstract}
	\textbf{Keywords:} Cyclic codes, Duadic codes, Square root bound, LCD codes\\
	\textbf{MSC(2020):} 94 B15
	
	\section{Introduction}\label{Introduction}
	Cyclic codes form the most studied class of error-correcting codes, due to their elegant polynomial structure, efficient decoding algorithms, and minimum
distance lower bounds such as the BCH bound. Most classical families of codes, like Reed-Muller, Reed-Solomon or Hamming codes are either cyclic or extended cyclic. In particular, there is the class of quadratic residue codes (and more generally duadic codes) whose rate is close to one half, length is $n,$ and which admits a lower bound on the minimum distance of order $\sqrt{n}$ the so-called square root bound (Cf. \cite[\S 6.5]{HufPle}, \cite{duadic codes,Q-codes,SD duadic code}). These families of codes were developed to construct self-dual codes by adding an overall parity-check, which is why their rate is close to one half. In fact, every binary cyclic code whose extension is self-dual is duadic \cite{SD duadic code}. These facts are documented in any good treatise on Coding Theory such as \cite{HufPle,Ling san}.
	Recently, Tang and Ding \cite{TD} introduced a class of cyclic codes of parameters
	$[n,(n+1)/2]$ for $n=2^m-1$ by their generator polynomial
$$
g_{\left( i,m \right)}(x)=\prod_{\begin{array}{c}
		w_2(j)\equiv i\,\,\left( \mathrm{mod}\;2 \right)\\
		1\leqslant j\leqslant n-1\\
\end{array}}{(}x-\alpha ^j).
$$
	Here $\alpha$ denotes a primitive root in $\F_{2^m}$ and $w_2(i)$ the Hamming weight of the binary expansion of the integer $i$. They denote by $\mathcal{C} _{\left( i,m \right)}$
	 such a code for $i \in \{0,1\}.$ Simply using the BCH bound, these authors were able to establish a square-root bound for these codes, similar to that known for quadratic residue codes and duadic codes.
These codes are also of rate close to one half, and are, therefore, direct competitors of the duadic codes.

By using the same method, more cyclic codes with good parameter have been found. In \cite{mod3and4}, Liu, Li and Ding constructed five families of cyclic codes and their duals of length $2^m-1$ over $\F_2$ with a similar generator polynomial
$$
g_{\left( i_1,i_2,m \right)}(x)=\prod_{\begin{array}{c}
		w_2(j)\equiv i_1~{\rm{or}}~i_2\,\,\left( \mathrm{mod}\;4 \right)\\
		1\leqslant j\leqslant 2^m-2\\
\end{array}}{(}x-\alpha ^j),
$$
where they take $i_1$ and $i_2$ differently in $\left\{ 0,1,2,3 \right\} $.
The remaining three families cyclic codes have generator polynomials of the form
$$
g_{\left( i,m \right)}(x)=\prod_{\begin{array}{c}
		w_2(j)\equiv i\,\,\left( \mathrm{mod}\;3 \right)\\
		1\leqslant j\leqslant 2^m-2\\
\end{array}}{(}x-\alpha ^j).
$$
As a result, they consider different generator polynomials by changing the module number. Those five families of cyclic codes with the same length $n=2^m-1$ and dimension $
\left( n-6 \right) /3\leqslant k\leqslant \left( n+6 \right) /3,$ and a good lower bound on their minimum distances which is near the square root bound \cite{HufPle}.

It is worth noting that Sun in \cite{sunzhonghua} constructed two families of cyclic codes and their duals over $\F_2$ that have better parameter than Tang and Ding in \cite{TD}. He claimed the parameters
 $[2^m-1, 2^{m-1}, d\geqslant 3\cdot 2^{\left( m-1 \right) /2}-1]$ where $m$ is odd and $m\geqslant 9$.

For the ternary alphabet, Chen, Ding, Li and Sun constructed four infinite families of cyclic codes and their duals in \cite{q=3}. They fixed $n=3^m-1$ and considered generator polynomials (with   $\alpha $ a primitive element of $\F_{3^m}$)
$$
g_{\left( i_1,i_2,m \right)}(x)=\prod_{\begin{array}{c}
		w_2(j)\equiv i_1~{\rm{or}}~i_2\,\,\left( \mathrm{mod}\;4 \right)\\
		1\leqslant j\leqslant 3^m-2\\
\end{array}}{(}x-\alpha ^j),
$$
where they chose $i_1$ and $i_2$ independently in $\left\{ 0,1,2,3 \right\} $. However, they only considered the case when $m$ is odd and got a dimension of around half of the length but the lower bounds of minimum distance is far below the square root bound.

In this paper, we aim to construct similar cyclic codes over $\mathbb{F}_4,$ of length $4^m-1$ for some integer $m,$ in relation with the four weight $w_4()$ of an integer, defined as the rational sum of its digits in its base $4$ expansion. Not only we could emulate the Tang/Ding approach, but also the arithmetics of $w_2()$ bear on that of $w_4(),$ as will be seen in some cases.
	In order to construct long arithmetic series of indices, suitable for the application of the BCH bound, it will be necessary to discuss on the congruence classes of $m$ modulo $4.$ For $m$ odd or $\equiv 2 \pmod{4}$ or $\equiv 4 \pmod{8}$, we are able to derive a lower bound on the minimum distance of our codes of order square root of the length, up to a multiplicative constant.
When $m$ is odd our codes turn out to be a pair of duadic codes. Our bounds in that case show that certain duadic codes admit a square root bound on the minimum distance (not only on the odd-like distance as in Fact 2.5 below). When $m$ is even our codes form a new family of LCD cyclic codes of rate close to one half. Further, there are two canonical binary codes attached to a quaternary code: the subfield subcode and the trace code.
We will study their parameters as well. We will also study the binary image of the quaternary codes studied along the lines of \cite{type II}, and the extension of our codes by an overall parity-check.

The material is laid out in the following way. The next section collects the basic notions and definitions needed for the other sections.
Section 3 introduces the two families of quaternary cyclic codes we consider. Section 4 contains some technical results. Section 5 and Section 6 give the main results of the paper, that is lower bounds on the minimum distance.
Section 7 studies the properties of extension codes, subfield subcodes and  trace codes. Section 8 concludes the article.
	\section {Preliminaries}
	In this section, we will give some basic concepts and definitions needed throughout the paper.
	\begin{de}
			Let $\mathbb{F}_q$ denote the finite field of order $q$, where $q$ is a prime power.  Let $\mathbb{F}_{q}^{n}$
		denote the vector space of all $n$-tuples over the finite field $\mathbb{F}_q$.  If $\C$ is a $k$-dimensional
		subspace of $\mathbb{F}_{q}^{n}$, then $\mathcal{C}$ will be called an $[n,k]$ {\bf linear code} over $\mathbb{F}_q$. The linear code $\C$ has $q^{k}$ codewords.
		The dual code $\C^{\bot}$ can be defined by
	$$
	\C^{\bot}=\left\{ \boldsymbol{x}\in \mathbb{F} _{q}^{n}\left| \forall \,\,\boldsymbol{c}\in \C, \boldsymbol{x}\cdot \boldsymbol{c}=0 \right. \right\} .
	$$
	
	\end{de}

\begin{de}
	If a code $\C$ satisfies $\C\subseteq \C^{\bot}$, then $\C$ is said to be {\bf self-orthogonal.} If a code $\C$ meets $\C= \C^{\bot}$, then $\C$ is said to be {\bf self-dual.}
	
\end{de}

\begin{de}
	A linear code $\C$ is called an {\bf LCD code} (linear complementary dual code) if $\C\cap \C^{\bot}=\left\{ 0 \right\} $.
\end{de}

\begin{de}
	A subset $S$ of $\mathbb {F}_{q}^{n}$ is {\bf cyclic} if $
	\left( c_{n-1}c_0c_1\cdots c_{n-2} \right) \in \C$ whenever $\left( c_0c_1c_2\cdots c_{n-1} \right) \in \C$. A linear code $\C$ is called a {\bf cyclic code} if $\C$ is a cyclic set. Cyclic codes over $\mathbb{F}_q$ of length $n$ are precisely the ideals of $R_n=\mathbb {F}_q[x]/(x^n-1)$.

\end{de}

\begin{de}
	Let $n$ be coprime to $q$, the {\bf $q$-cyclotomic coset} of $i$ modulo $n$ is defined by
	$$
	C_{i}^{\left( q,n \right)}=\left\{ i\cdot q^l\,\,mod\,\,n:0\leqslant l\leqslant l_{\max}-1 \right\},
	$$ where $l_{max}$
is the least positive integer such that $
i\cdot q^{l_{\max}}\equiv i\,\,\left( mod\,\,n \right)
$, and $l_{max} $ is equal to the size of $C_{i}^{\left( q,n \right)}$.

\end{de}

\begin{fa}
	 Let $\alpha$ be a primitive
	$n_{th}$ root of unity in some extension field of $\F_q$.
	For any cyclic code $\C$ in $R_n$  its generator polynomial $g(x)$ can be written as 	$$
	g(x)=\prod_s{M_{\alpha ^s}\left( x \right) =\prod_{i\in C_{s}^{\left( q,n \right)}}{\left( x-\alpha ^i \right)}},
	$$	where $s$ runs through some subset of
	representatives of the $q$-cyclotomic cosets $C_{s}$ modulo $n$.
	
\end{fa}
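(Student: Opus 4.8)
The plan is to assemble this statement from three standard pieces: the principal-ideal structure of $R_n$, the separability of $x^n-1$, and the Frobenius action on the roots. First I would recall that $\F_q[x]$ is a principal ideal domain, so its quotient $R_n=\F_q[x]/(x^n-1)$ is a principal ideal ring. Hence any cyclic code $\C$, being an ideal of $R_n$, is generated by a single polynomial; taking the monic representative of least degree in the ideal, one checks in the usual way (by the division algorithm) that this generator $g(x)$ is unique and that it divides $x^n-1$. This reduces the problem to describing the monic divisors of $x^n-1$ over $\F_q$.

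Next I would use the hypothesis $\gcd(n,q)=1$ to show that $x^n-1$ is separable: its formal derivative $n x^{n-1}$ is nonzero and coprime to $x^n-1$, so $x^n-1$ has $n$ distinct roots, namely the powers $\alpha^0,\alpha^1,\dots,\alpha^{n-1}$ of the primitive $n$-th root of unity $\alpha$. Consequently $x^n-1$ factors into a product of \emph{distinct} monic irreducible polynomials over $\F_q$, and every monic divisor $g(x)$ is a product of a subset of these irreducible factors. It remains to identify each irreducible factor with a minimal polynomial indexed by a cyclotomic coset.

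The heart of the argument is the Frobenius orbit description. The key step is that for $\beta=\alpha^s$, the minimal polynomial $M_{\beta}(x)$ over $\F_q$ has as its root set exactly the orbit of $\beta$ under the Frobenius map $\varphi\colon x\mapsto x^q$. Indeed, $\varphi$ fixes $\F_q$ pointwise and therefore permutes the roots of any polynomial over $\F_q$; applying it repeatedly to $\alpha^s$ produces $\alpha^{sq},\alpha^{sq^2},\dots$, and reducing the exponents modulo $n$ shows this orbit is precisely $\{\alpha^i : i\in C_s^{(q,n)}\}$. Since $M_{\beta}(x)$ is irreducible and monic with these (distinct) roots, we obtain
$$
M_{\alpha^s}(x)=\prod_{i\in C_s^{(q,n)}}(x-\alpha^i).
$$
I expect this Frobenius-orbit identification to be the only nonroutine point: one must verify both that the coset exponents close up modulo $n$ exactly when Frobenius returns to the start (so the product has the right degree $\lvert C_s^{(q,n)}\rvert$) and that the resulting polynomial indeed lies in $\F_q[x]$ and is irreducible.

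Finally I would combine the pieces. The distinct irreducible factors of $x^n-1$ are exactly the polynomials $M_{\alpha^s}(x)$ as $s$ ranges over a complete set of $q$-cyclotomic coset representatives modulo $n$, and these cosets partition $\{0,1,\dots,n-1\}$. Since $g(x)\mid x^n-1$ and the factorization is into distinct irreducibles, $g(x)$ must be the product of those $M_{\alpha^s}(x)$ corresponding to some subset $T$ of the coset representatives, giving
$$
g(x)=\prod_{s\in T} M_{\alpha^s}(x)=\prod_{s\in T}\ \prod_{i\in C_s^{(q,n)}}(x-\alpha^i),
$$
which is the asserted formula. This completes the plan.
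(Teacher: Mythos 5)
Your proposal is correct. The paper gives no proof of this statement at all: it is labelled a Fact and quoted as standard material from the literature (textbook content, cf.\ \cite{HufPle}), so there is no in-paper argument to compare against; your chain of steps --- principal ideality of $R_n$ giving a unique monic generator dividing $x^n-1$, separability of $x^n-1$ from $\gcd(n,q)=1$, and the Frobenius-orbit identification $M_{\alpha^s}(x)=\prod_{i\in C_s^{(q,n)}}(x-\alpha^i)$, with the closure, rationality and irreducibility points correctly flagged as the ones needing verification --- is precisely the standard proof that such references supply.
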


\begin{de}
	Let $T=\bigcup_s{C_s}$ be the union of these $q$-cyclotomic cosets.
	 The roots of unity $Z=\left\{ \alpha ^i|i\in T \right\} $
	  are called the {\bf zeros} of the cyclic code $\C$.  The set $T$ is called the {\bf defining set} of $\C$.

\end{de}

\begin{fa}
	
	Let $\C$ be a cyclic code of length $n$ over $\F_q$ with zeros $
	Z=\left\{ \alpha ^i|i\in T \right\}
	$
	 for
	some primitive $n$th root of unity $\alpha$ where $T$ is the defining set of $\C$. Let $a$ be an integer such
	that $\gcd(a, n) = 1$ and let $a^{-1}$ be the multiplicative inverse of $a$ in the integers modulo $n$.
	Then $\left\{ \alpha ^{a^{-1}i}|i\in T \right\} $ are the zeros of the cyclic code $\C_{\mu a} $ and $a^{-1}i$ mod $n$ is the defining
	set for $\C_{\mu a} $. The cyclic code $\C$ and $\C_{\mu a}$ are equivalent.

\end{fa}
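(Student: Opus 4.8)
The plan is to realise the multiplier $\mu_a$ as a concrete coordinate permutation of $R_n=\F_q[x]/(x^n-1)$ and then track how it acts on evaluations at powers of $\alpha$. First I would define $\mu_a$ on $R_n$ by $\mu_a(c)(x)=c(x^a)\bmod (x^n-1)$, so that the monomial $x^i$ is sent to $x^{ai\bmod n}$. Since $\gcd(a,n)=1$, the map $i\mapsto ai\bmod n$ is a bijection of $\Z/n\Z$, and hence $\mu_a$ merely permutes the $n$ coordinate positions. This alone shows that $\C_{\mu a}=\mu_a(\C)$ is a permutation-equivalent (in particular, equivalent) copy of $\C$, which already settles the last assertion of the statement.

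Next I would check that $\C_{\mu a}$ is again cyclic, so that the phrase ``defining set'' is meaningful for it. The key is the commutation relation between $\mu_a$ and the cyclic shift $\sigma\colon c(x)\mapsto x\,c(x)\bmod(x^n-1)$. A one-line computation gives $\mu_a\circ\sigma=\sigma^a\circ\mu_a$, and because $\gcd(a,n)=1$ the shift $\sigma^a$ generates the same cyclic group $\langle\sigma\rangle$ as $\sigma$. Thus invariance of $\C$ under $\sigma$ forces $\mu_a(\C)$ to be invariant under $\sigma^a$, and therefore under $\sigma$ itself, so $\C_{\mu a}$ is cyclic.

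The heart of the argument is the evaluation identity. For any $c\in\C$ its image $\tilde c=\mu_a(c)$ satisfies $\tilde c(\alpha^j)=c(\alpha^{aj})$, where one uses $(\alpha^j)^n=1$ to see that reduction modulo $x^n-1$ does not change the value. Consequently $\alpha^j$ is a common zero of $\C_{\mu a}$ if and only if $\alpha^{aj}$ is a common zero of $\C$, that is, if and only if $aj\bmod n\in T$, equivalently $j\in a^{-1}T\pmod n$. Hence the defining set of $\C_{\mu a}$ is exactly $\{a^{-1}i\bmod n: i\in T\}$ and its zeros are $\{\alpha^{a^{-1}i}: i\in T\}$, as claimed.

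The step I expect to require the most care is the precision in the last paragraph, namely the ``only if'' direction showing that the defining set of $\C_{\mu a}$ equals $a^{-1}T$ and not some proper superset. This is exactly where cyclicity, established in the second step, is needed: the defining set of a cyclic code is the full set of exponents $j$ with $\alpha^j$ a common zero, and the bijection $j\mapsto aj$ transports the common-zero locus of $\C$ onto that of $\C_{\mu a}$ without loss, so no spurious exponents can appear. Everything else reduces to the routine bookkeeping of the evaluation identity.
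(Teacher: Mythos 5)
Your proof is correct, but there is nothing in the paper to compare it against: the paper states this result as a background Fact in the preliminaries, without proof, as a standard property of multipliers acting on cyclic codes (cf.\ \cite{HufPle}), and then uses it implicitly in Sections 5 and 6 when it replaces $\alpha$ by the primitive root $\gamma=\alpha^{v^{-1}}$ so that the defining set of $\C_{(i,m)}$ with respect to $\gamma$ contains a long run of consecutive integers, to which the BCH bound applies. Your argument supplies a complete, self-contained verification along the usual textbook lines: realising $\mu_a$ as the substitution $c(x)\mapsto c(x^a)\bmod(x^n-1)$, noting that $\gcd(a,n)=1$ makes this a coordinate permutation (hence the equivalence claim), proving cyclicity of the image via the intertwining relation $\mu_a\circ\sigma=\sigma^a\circ\mu_a$ together with $\langle\sigma^a\rangle=\langle\sigma\rangle$, and transporting the zero locus by the evaluation identity $\mu_a(c)(\alpha^j)=c(\alpha^{aj})$. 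The point you flagged as delicate is handled correctly: since $T$ is by definition the \emph{full} set of exponents $j$ with $\alpha^j$ a common zero of $\C$, your two-way equivalence ($\alpha^j$ is a common zero of $\C_{\mu a}$ if and only if $aj\bmod n\in T$) shows the defining set of $\C_{\mu a}$ is exactly $a^{-1}T$ modulo $n$, with no spurious elements; as a byproduct this also confirms that $a^{-1}T$ is a union of $q$-cyclotomic cosets, which is consistent with the cyclicity established in your second step.
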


\begin{fa}\label{duadic code definition}
	Let $\C_1$ and $\C_2$ be two cyclic codes over $\F_q$ with defining sets $T_1$ and $T_2$, respectively, where $0\notin T_1$ and $0\notin T_2$. Then $\C_1$ and $\C_2$ are a pair of odd-like duadic codes if and only if:
	
(1)	$T_1$ and $T_2$ satisfy $
	T_1\cup T_2=\left\{ 1,2,\cdots ,n-1 \right\}
	$ and $T_1\cap T_2=\oslash $ and
	
(2)	there is a multiplier $\mu _b$ such that $T_1\mu _b=T_2$ and $T_2\mu _b=T_1$.

\end{fa}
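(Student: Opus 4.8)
The plan is to derive the equivalence directly from the definition of duadic codes in terms of a \emph{splitting} of $n$, using the preceding facts about cyclotomic cosets and multipliers. Recall (adopting the standard definition) that a multiplier $\mu_b$ with $\gcd(b,n)=1$ is said to \emph{give a splitting} of $n$ if $\Z_n\setminus\{0\}$ can be partitioned into two sets $S_1,S_2$, each a union of $q$-cyclotomic cosets modulo $n$, with $\mu_b S_1=S_2$ and $\mu_b S_2=S_1$; the cyclic codes whose defining sets are $S_1$ and $S_2$ are then, by definition, a pair of odd-like duadic codes (the ``odd-like'' label being exactly the requirement $0\notin S_1$, $0\notin S_2$). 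Thus the entire task reduces to matching conditions (1) and (2) with the conditions that define a splitting.

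First I would treat the forward direction. Assuming $\C_1,\C_2$ are odd-like duadic, there is by definition a splitting $\mu_b$ with $\{S_1,S_2\}=\{T_1,T_2\}$. The partition property of the splitting is precisely $T_1\cup T_2=\{1,\dots,n-1\}$ together with $T_1\cap T_2=\varnothing$, which is condition (1), and the requirement $\mu_b S_1=S_2$, $\mu_b S_2=S_1$ is precisely condition (2).

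For the converse I would assume (1) and (2) and verify that $(T_1,T_2)$ is a genuine splitting. Since $T_1$ and $T_2$ are defining sets of cyclic codes, each is automatically a union of $q$-cyclotomic cosets modulo $n$ by the generator-polynomial description (Fact 2.1). Condition (1) together with the hypotheses $0\notin T_1$ and $0\notin T_2$ yields the disjoint union $\{0\}\sqcup T_1\sqcup T_2=\Z_n$, so $T_1,T_2$ partition the nonzero residues into unions of cyclotomic cosets, while condition (2) supplies the multiplier interchanging them. Hence $\mu_b$ gives a splitting with parts $T_1,T_2$, and $\C_1,\C_2$ are the associated odd-like duadic codes.

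The only point requiring care, and the step I would flag as the main obstacle, is checking that condition (2) is even well-posed at the level of cyclic codes, i.e.\ that $\mu_b T_1$ is again a union of $q$-cyclotomic cosets so that it can serve as a defining set. Here I would invoke the fact on multipliers (Fact 2.2): because $\gcd(b,n)=1$ and $b\,q^{\ell} j\equiv q^{\ell}(bj)\pmod n$, the map $\mu_b$ commutes with multiplication by $q$ and therefore permutes the $q$-cyclotomic cosets. Consequently $\mu_b T_1$ is a union of cyclotomic cosets, condition (2) is meaningful, and both implications become a straightforward restatement of the splitting definition. No estimate is needed; the content is entirely bookkeeping with cyclotomic cosets and multipliers.
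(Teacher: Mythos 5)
The paper offers no proof of this statement at all: it is one of the quoted ``Facts'' of Section 2, imported from the literature (it is in substance Theorem 6.1.3 of \cite{HufPle}), so there is no in-paper argument to compare yours against, and your proposal must be judged on its own. As bookkeeping it is correctly organized: under the splitting-based definition of odd-like duadic codes the forward direction is immediate, and for the converse you rightly check the only things that need checking, namely that $T_1$ and $T_2$, being defining sets of cyclic codes, are unions of $q$-cyclotomic cosets, that condition (1) together with $0\notin T_1$, $0\notin T_2$ yields the partition $\Z_n=\{0\}\sqcup T_1\sqcup T_2$, and that $\mu_b$ permutes $q$-cyclotomic cosets since $b(q^{\ell}j)\equiv q^{\ell}(bj)\pmod{n}$, so condition (2) is compatible with the coset structure.

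The step you should be uneasy about is the one you dispose of in your opening sentence with ``by definition.'' In the source this Fact comes from, odd-like duadic codes are \emph{not} defined by splittings: they are defined through even-like idempotents $e_1(x),e_2(x)$ satisfying $e_1(x)+e_2(x)=1-\frac{1}{n}j(x)$, where $j(x)=1+x+\cdots+x^{n-1}$, together with a multiplier interchanging the two \emph{codes}; the content of the Fact is precisely the equivalence of that idempotent definition with the defining-set conditions (1) and (2). Your proposal never touches this equivalence. If the splitting description is taken as the definition (as many modern papers, and arguably this one, implicitly do), the Fact is a near-tautology and your argument is complete; if the idempotent description is the definition, your argument assumes exactly what is to be proved. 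A self-contained proof in the latter framework would evaluate the idempotents at the roots of unity (an idempotent takes only the values $0$ and $1$ at each $\alpha^i$, and $\frac{1}{n}j(\alpha^i)$ is $1$ at $i=0$ and $0$ otherwise), translate the idempotent identity into condition (1), and use the correspondence between multipliers acting on codes and on defining sets (the paper's Fact 2.2) to get condition (2). Since the paper itself treats the statement as known, this is not a defect relative to the paper; but as a blind proof, your write-up is only as strong as the definition it presupposes, and you should state explicitly which definition of duadic codes you are working from.
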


\begin{de}
	The {\bf Lee composition} of a vector $
	\boldsymbol{x}=\left( x_1,\cdots ,x_n \right) \in \F_{4}^{n}
	$
	is defined as $$(n_0(\boldsymbol{x}),n_1(\boldsymbol{x}),n_2(\boldsymbol{x}))$$ where $n_0(\boldsymbol{x})$ is the number of $x_i=0$, $n_2(\boldsymbol{x})$ is the number of $x_i=1$, and $n_1(\boldsymbol{x})=n-n_0(\boldsymbol{x})-n_2(\boldsymbol{x})$. The {\bf Lee weight} $w_L(\boldsymbol{x})$ of $\boldsymbol{x}$ is then defined as
	$$n_1(\boldsymbol{x})+2n_2(\boldsymbol{x}).$$
	The {\bf Lee distance} of two codewords $\boldsymbol{x}$ and $\boldsymbol{y}$ is the Lee weight of $\boldsymbol{x}-\boldsymbol{y}$.
	
\end{de}

\begin{de}\cite{type II}
	A self-dual code $\C$ over $\F_4$ is said to be {\bf Type II} if the Lee weight of every codeword is a multiple of $4$ and {\bf Type I} otherwise.
	
\end{de}

\begin{de}
	Let   $\F_4=\left\{ 0,1,w,\overline{w} \right\} $.
	There is a natural Gray map $\phi $ which is a $\F_2$-linear isometry from
	($	\F_{4}^{n}$, lee distance) onto ($\F_{2}^{2n}$, Hamming distance). For all $\boldsymbol{x},\boldsymbol{y} \in \F_{2}^{n}$, the {\bf Gray map} $\phi$ is defined as
	$$
	\phi \left( w\boldsymbol{x}+\overline{w}\boldsymbol{y} \right) =\left( \boldsymbol{x},\boldsymbol{y} \right).
	$$

\end{de}

\begin{fa}(\cite[BCH bound]{HufPle})
	Let $\C$ be a cyclic code of length $n$ over $\mathbb{F} _q$ with defining set
	$T$. Suppose $\C$ has minimum weight $d$. Assume $T$ contains $\delta -1$ consecutive elements for some integer $\delta-1$. Then $d\geqslant \delta $.
\end{fa}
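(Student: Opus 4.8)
The plan is to argue by contradiction through a Vandermonde determinant, which is the standard route to the BCH bound. Let $\alpha$ be a primitive $n$th root of unity in an extension of $\F_q$, and recall that, by the definition of the defining set, a codeword $\boldsymbol{c}=(c_0,\dots,c_{n-1})$, identified with the polynomial $c(x)=\sum_{t=0}^{n-1}c_t x^t$, belongs to $\C$ precisely when $c(\alpha^i)=0$ for every $i\in T$. By hypothesis $T$ contains a block of $\delta-1$ consecutive residues, say $b,b+1,\dots,b+\delta-2$ modulo $n$. First I would suppose, for contradiction, that there is a nonzero codeword $\boldsymbol{c}$ whose Hamming weight $w$ satisfies $w\leqslant\delta-1$, and let $k_1,\dots,k_w$ denote the positions of its nonzero entries, with corresponding values $c_{k_1},\dots,c_{k_w}\in\F_q\setminus\{0\}$.

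Next I would transcribe the first $w$ of the consecutive vanishing conditions, namely $c(\alpha^{b})=c(\alpha^{b+1})=\dots=c(\alpha^{b+w-1})=0$ (available because $w\leqslant\delta-1$), as a homogeneous linear system $M\boldsymbol{v}=\boldsymbol{0}$, where $\boldsymbol{v}=(c_{k_1},\dots,c_{k_w})^{\top}$ and $M$ is the $w\times w$ matrix with $(j,l)$-entry $\alpha^{(b+j)k_l}$ for $0\leqslant j\leqslant w-1$ and $1\leqslant l\leqslant w$.

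The key step is to show that $M$ is nonsingular. I would factor $M=VD$, where $D=\mathrm{diag}(\alpha^{bk_1},\dots,\alpha^{bk_w})$ and $V$ is the Vandermonde matrix with $(j,l)$-entry $(\alpha^{k_l})^{j}$. Since $\alpha$ is a primitive $n$th root of unity and the $k_l$ are distinct residues modulo $n$, the elements $\alpha^{k_1},\dots,\alpha^{k_w}$ are pairwise distinct, so $\det V=\prod_{1\leqslant l<l'\leqslant w}(\alpha^{k_{l'}}-\alpha^{k_l})\neq 0$; likewise $\det D\neq 0$, being a product of roots of unity. Hence $\det M\neq 0$, which forces $\boldsymbol{v}=\boldsymbol{0}$ and contradicts $c_{k_l}\neq 0$. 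Therefore no nonzero codeword of Hamming weight at most $\delta-1$ can exist, i.e. $d\geqslant\delta$.

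I expect the only genuine subtlety to be the bookkeeping: ensuring that the $w$ selected equations really do come from consecutive exponents lying inside the block $b,\dots,b+\delta-2$ (which is exactly what $w\leqslant\delta-1$ guarantees), and peeling off the diagonal factor $D$ cleanly so that what remains is an honest Vandermonde matrix. The rest is routine linear algebra over the splitting field of $x^n-1$ together with the classical evaluation of the Vandermonde determinant.
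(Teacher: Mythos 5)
The paper offers no proof of this statement: it is quoted as a Fact directly from \cite{HufPle}, so there is no internal argument to compare against. Your proof by contradiction via the factorization $M=VD$, with $V$ a Vandermonde matrix in the pairwise distinct elements $\alpha^{k_1},\dots,\alpha^{k_w}$ and $D$ an invertible diagonal matrix, is correct in every step and is essentially the classical proof of the BCH bound found in the cited reference.
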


\begin{de} A vector of length $n$ over $\F_q$ is said to be {\bf odd-like} if it does belong to the hyperplane $\{ \boldsymbol{x} \in \F_q^n \mid \sum_{i=1}^n x_i=0\}.$
The {\bf odd-like distance} of a linear code $\subseteq \F_q^n$ is the minimum weight of its odd-like vector.
\end{de}

We quote the following square-root bound for duadic codes. Note that it does not apply to the minimum distance.
\begin{fa} (\cite[Th. 6.5.2]{HufPle}) Let $\D_1$ and $\D_2$ be a pair of odd-like duadic codes of length $n$ over $\F_q$ . Let $d_0$ be their (common) minimum odd-like weight. Then the following hold:
	\begin{enumerate}
		\item [(1)] $d_{0}^{2}\geqslant n$.
		\item [(2)] If the splitting defining the duadic codes is given by $\mu _{-1}$, then $d_{0}^{2}-d_0+1\geqslant n$.
		\item [(3)] Suppose $d_{0}^{2}-d_0+1=n$, where $d_0>2$, and  assume that the splitting defining the duadic codes is given by $\mu =-1$. Then $d_0$ is the minimum weight of both $\D_1$ and $\D_2$.
		
	\end{enumerate}
	
\end{fa}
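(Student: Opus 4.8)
The plan is to work throughout in the ring $R_n=\F_q[x]/(x^n-1)$ and to exploit the multiplicative structure of the defining sets. First I would record the structural fact that drives all three parts. Since $\D_1,\D_2$ have defining sets $S_1,S_2$ with $S_1\cup S_2=\{1,\dots,n-1\}$ and $S_1\cap S_2=\oslash$ (Fact~\ref{duadic code definition}), their intersection $\D_1\cap\D_2$ is the cyclic code with defining set $\{1,\dots,n-1\}$, namely the repetition code $\langle\boldsymbol{1}\rangle$ spanned by $j(x)=1+x+\cdots+x^{n-1}$. Moreover, if $u(x)\in\D_1$ and $v(x)\in\D_2$ then, since each $\D_i$ is an ideal, the product $u(x)v(x)$ lies in $\D_1\cap\D_2=\langle\boldsymbol{1}\rangle$ (equivalently $u(\alpha^i)v(\alpha^i)=0$ for every $i\in S_1\cup S_2$). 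Evaluating at $x=1$ gives $(uv)(1)=u(1)v(1)$, so $uv$ is a \emph{nonzero} multiple of $\boldsymbol{1}$ exactly when both $u$ and $v$ are odd-like. I will also use that a multiplier $\mu_a$ permutes coordinates, hence preserves both weight and the coordinate sum, so it sends odd-like codewords to odd-like codewords of the same weight.

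For part (1), let $c(x)\in\D_1$ be odd-like of weight $d_0$ and let $\mu_a$ give the splitting, so $\hat c=\mu_a(c)\in\D_2$ is odd-like of weight $d_0$. By the paragraph above $c\hat c=\beta\,j(x)$ with $\beta\neq0$, a vector of weight $n$. But $c$ and $\hat c$ each have $d_0$ nonzero coefficients, so their product has at most $d_0^2$ terms even before reducing modulo $x^n-1$, and reduction cannot increase that count; hence $n\le d_0^2$. For part (2), take $a=-1$, so $\hat c(x)=c(x^{-1})$ and $c\hat c=\sum_{i,j}c_ic_j\,x^{\,i-j}$. The $d_0$ diagonal terms $i=j$ all land at $x^0$, while the $d_0^2-d_0$ off-diagonal terms occupy positions $i-j\not\equiv0$, i.e. among $\{1,\dots,n-1\}$. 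Since $c\hat c=\beta\,j(x)$ with $\beta\neq0$, each of these $n-1$ positions must receive at least one off-diagonal term, which forces $n-1\le d_0^2-d_0$.

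Part (3) is where the real work lies, and it is the step I expect to be the main obstacle. Equality $n-1=d_0^2-d_0$ means the off-diagonal terms cover each residue $1,\dots,n-1$ exactly once, so $D=\mathrm{supp}(c)$ is a planar difference set modulo $n$; matching the (equal) coefficients of $\beta\,j(x)$ forces $c_ic_j$ to be constant over difference pairs, whence $c$ is a scalar multiple of the indicator $\boldsymbol{1}_D$. The translates $D+t$ are then the lines of a projective plane of order $d_0-1\ge2$ (this is exactly where $d_0>2$ is needed): two distinct points lie on a unique line, each point lies on $d_0$ lines, and the $d_0$ lines through a fixed point partition the remaining $n-1$ points. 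Because $\mu_{-1}$ gives the splitting, a short computation with the defining sets shows $\C_1^{\perp}=\D_1$, so the even-like subcode of $\D_1$ equals $\D_1^{\perp}$; hence any even-like $u\in\D_1$ is orthogonal to every cyclic shift of $c$, i.e. $\sum_{i\in D+t}u_i=0$ on every line. Fixing a point $P$ in the support of a nonzero such $u$, each of the $d_0$ lines through $P$ must contain a second support point to cancel $u_P$, and these second points are distinct because distinct lines meet only at $P$; therefore $\mathrm{wt}(u)\ge d_0+1$. Thus every nonzero even-like codeword has weight exceeding $d_0$, while the minimum odd-like weight is $d_0$, so $d_0$ is the minimum weight of $\D_1$, and applying $\mu_{-1}$ gives the same conclusion for $\D_2$. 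The delicate point throughout part (3) is extracting the difference-set/projective-plane rigidity from the equality case and then running the line-counting argument to separate the even-like minimum weight from $d_0$.
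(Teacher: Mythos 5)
The paper does not actually prove this statement: it is quoted as a Fact with a citation to \cite[Th.\ 6.5.2]{HufPle} and used only as a benchmark for comparison with the new bounds. Your argument is correct and is essentially the classical proof from that reference: the product of a minimum-weight odd-like codeword with its image under the splitting multiplier lands in $\D_1\cap\D_2=\langle\boldsymbol{1}\rangle$ and is a nonzero multiple of the all-one vector, giving (1) by counting terms and (2) by separating the diagonal from the off-diagonal terms when the multiplier is $\mu_{-1}$; in the equality case (3), the support of the codeword becomes a planar difference set, the coefficients are forced to be constant (this is exactly where $d_0>2$ is needed, as you note), and the line-count through a support point of a nonzero even-like codeword — legitimate because the $\mu_{-1}$ splitting makes the even-like subcode of $\D_1$ equal to $\D_1^{\perp}$ — gives weight at least $d_0+1$, so the minimum weight is attained by odd-like words. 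All the delicate steps (nonvanishing of $n$ in $\F_q$, the rigidity argument, the duality computation) are handled correctly, so your reconstruction can stand as a self-contained proof of the cited fact.
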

\hspace{-0.68cm}{\bf Remark:} See \cite[Th. 6.6.22]{HufPle} for the square root bound on quadratic residue codes which does apply to the minimum weight.
\begin{de} \cite{type II}
	Let $\C$ be a cyclic code over $\F_4$. The {\bf subfield subcode} $\C_2$ of $\C$ is the linear subcode of $\C$ consisting of those codewords all entries of which are binary. The {\bf trace code} $Tr(\C)$ of $\C$ is the
	binary code obtained from $\C$ by taking the trace of each codeword coordinatewise.
	
\end{de}

	\section {The First Two Families of Quaternary Cyclic Codes}
	Let $m\geqslant 1$ be a positive integer and let $n=4^m-1$. For any $i$ between $0$ and $4^m-1$, the expression
\begin{equation}
		i=i_{m-1}4^{m-1}+i_{m-2}4^{m-2}+...+i_14+i_0 \notag\end{equation}
 is the $4$-adic expansion of $i$, where $i_j\in \left\{ 0,1,2,3 \right\} $ for $0\leqslant j\leqslant m-1$.
	
The  $4$-weight of $i$, denoted by $w_4(i)$ is defined to be  $\omega _4\left( i \right) =\sum_{j=0}^{m-1}{i_j}.$ Define  $$
	T_{\left( 0,m \right)} =\left\{ 1\leqslant i\leqslant n-1:\omega _4\left( i \right) \equiv 0 \pmod{2} \right\}, $$ and $$
	 T_{\left( 1,m \right)} =\left\{ 1\leqslant i\leqslant n-1: \omega _4\left( i \right) \equiv 1 \pmod{2} \right\}.
	 $$
	
	 Let $\alpha $ be a generator of $\mathbb{F}_{4^m}^*$, then the cyclic code $\C_{(0,m)}$ and $\C_{(1,m)}$ are generated by the defining set $T_{(0,m)}$ and $T_{(1,m)}$, respectively. Actually we define a polynomial

$$
	g_{\left( i,m \right)}\left( x \right) =\prod_{\begin{array}{c}
			1\leqslant j\leqslant n-1\\
			w_4\left( j \right) \equiv i\pmod{2}\\
	\end{array}}^{}{\left( x-\alpha ^j \right)}
	$$
	for each $	i\in \left\{ 0,1 \right\} $.
	Some Frobenius action shows that $g_{\left( i,m \right)}\left( x \right) 	\in {\mathbb{F}}_4\left[ x \right] $. When $m=2$, Magma \cite{M} shows that the parameters of the quaternary cyclic code $\C_{(1,m)}$ are $[15,7,5]$, which is near optimal. This motivated us to study the parameters of two families of quaternary codes and their duals.
	
	\section{Some Auxiliary Results}
In this section, we will present some auxiliary results on the defining sets of the two families of quaternary cyclic codes $\C_{(i,m)}$, which will play an important role in developing good lower bounds on the minimum distance of quaternary codes.

	\newtheorem{mylemma}{Lemma}[section]
	\begin{mylemma}\label{gcd(v,n)=1}
		Let $l$ be any positive integer and let $l/\gcd(m,l)$ be odd. Then $$
		\gcd\left( a^m+1,a^l-1 \right) =\begin{cases}
			1~~~~{\rm{if}}~a~\rm{is~even},\\
			2~~~~{\rm{if}}~a~\rm{is~odd}.\\
		\end{cases}
		$$
	
	\end{mylemma}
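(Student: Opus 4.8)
The plan is to split the computation into two stages: first show that $\gcd(a^m+1,a^l-1)$ has no odd prime divisor, so it is a power of $2$, and then pin down that power from the parity of $a$. For the first stage I would argue via multiplicative orders. Suppose an odd prime $p$ divides both $a^m+1$ and $a^l-1$. From $a^l\equiv 1\pmod p$ we get $\gcd(a,p)=1$, so $a$ has a well-defined order $t=\mathrm{ord}_p(a)$ with $t\mid l$. From $a^m\equiv -1\pmod p$ we get $a^{2m}\equiv 1$, and since $p$ is odd, $-1\not\equiv 1$, so the order of $a^m$ is exactly $2$. As the order of $a^m$ equals $t/\gcd(t,m)$, we obtain $t/\gcd(t,m)=2$, which forces $v_2(t)=v_2(m)+1$, where $v_2$ denotes the $2$-adic valuation. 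Combined with $t\mid l$, i.e.\ $v_2(t)\le v_2(l)$, this yields $v_2(l)\ge v_2(m)+1$. This contradicts the hypothesis, because ``$l/\gcd(m,l)$ odd'' is exactly the statement $v_2(l)\le v_2(m)$ (the valuation of $l/\gcd(m,l)$ is $v_2(l)-\min(v_2(m),v_2(l))$, which vanishes precisely when $v_2(l)\le v_2(m)$). Hence no odd prime divides the gcd, and $\gcd(a^m+1,a^l-1)=2^k$ for some $k\ge 0$.

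For the second stage I would read off $k=\min\{v_2(a^m+1),\,v_2(a^l-1)\}$. If $a$ is even then $a^m+1$ is odd, so $k=0$ and the gcd is $1$. If $a$ is odd then both factors are even, so $k\ge 1$, and the task is to show $k=1$, i.e.\ that at least one of the two valuations equals $1$. This is settled by a short case analysis on $a \bmod 4$ and the parities of $m$ and $l$ (equivalently, a Lifting-the-Exponent computation): one finds $v_2(a^m+1)=1$ unless $a\equiv 3\pmod 4$ with $m$ odd, and $v_2(a^l-1)=1$ precisely when $a\equiv 3\pmod 4$ with $l$ odd. The only configuration making \emph{both} valuations exceed $1$ is therefore $a\equiv 3\pmod 4$ with $m$ odd and $l$ even; but $m$ odd gives $v_2(m)=0$, and the hypothesis $v_2(l)\le v_2(m)$ then forces $l$ odd, excluding this case. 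Thus $k=1$ and the gcd is $2$.

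The main obstacle is exactly this last parity bookkeeping in the odd-$a$ case: the hypothesis must be invoked a \emph{second} time, to rule out $l$ even when $m$ is odd, and it is genuinely needed there. Indeed the statement is tight, since for $a=3$, $m=1$, $l=2$ one has $\gcd(4,8)=4\neq 2$, which matches the failure of the hypothesis ($l/\gcd(m,l)=2$ is even) in that instance. Everything else is routine once the reduction to a power of $2$ is in place.
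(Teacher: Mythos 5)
Your proof is correct in all its steps: the reduction to a power of $2$ via orders (an odd prime $p$ dividing both numbers would force $v_2(\mathrm{ord}_p(a))=v_2(m)+1$ while $\mathrm{ord}_p(a)\mid l$, contradicting the hypothesis in its equivalent form $v_2(l)\le v_2(m)$), and the subsequent parity bookkeeping modulo $4$ pinning the gcd to $1$ or $2$, including the second, genuinely necessary invocation of the hypothesis to exclude the case $a\equiv 3\pmod 4$, $m$ odd, $l$ even. There is, however, nothing in the paper to compare it against: the authors state this lemma with no proof whatsoever, treating it as a known fact (it is classical; the full version says $\gcd\left(a^m+1,a^l-1\right)=a^{\gcd(m,l)}+1$ when $l/\gcd(m,l)$ is even, and $1$ or $2$ according to the parity of $a$ otherwise), and they only ever apply it with $a=4$, i.e.\ in the even case where the gcd is $1$. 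So your argument supplies a complete, self-contained proof that the paper omits, following what is essentially the standard textbook route; your tightness example $(a,m,l)=(3,1,2)$ is a nice check that the hypothesis cannot be weakened in the odd case.
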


\begin{mylemma}
	For any integer $m\geqslant 1$, let $v=4^{s}+1$. If $m/\gcd\left( m,s \right)$ is odd, then the set $
	\left\{ av:1\leqslant a\leqslant 4^x \right\} \subset T_{(0,m)}
	$, where $x$ is a positive integer and meets $s+x<m$.
	
\end{mylemma}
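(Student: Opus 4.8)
The plan is to separate the claim, for each $a$ with $1\le a\le 4^x$, into two independent checks: that $av$ lies in the admissible range $1\le av\le n-1$, and that $\omega_4(av)\equiv 0\pmod 2$. The range is immediate: since $a\ge 1$ and $v=4^s+1\ge 2$ we have $av\ge 1$, while $s+x<m$ gives $s+x\le m-1$, so $av\le 4^x(4^s+1)=4^{s+x}+4^x\le 2\cdot 4^{m-1}<4^m-1=n$. In particular no reduction modulo $n$ takes place, so $\omega_4(av)$ can be read off directly from the ordinary base-$4$ expansion of the integer $av$.

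For the weight I would write $av=a\cdot 4^s+a$ and use that multiplication by $4^s$ merely shifts the base-$4$ digits of $a$ upward by $s$ positions. The governing principle is the base-$4$ carry identity $\omega_4(b)+\omega_4(c)-\omega_4(b+c)=3\kappa$, where $\kappa$ is the number of carries in the addition $b+c$; modulo $2$ this reads $\omega_4(b+c)\equiv \omega_4(b)+\omega_4(c)+\kappa\pmod 2$. Applying it with $b=a\cdot 4^s$ and $c=a$, and using $\omega_4(a\cdot 4^s)=\omega_4(a)$, gives $\omega_4(av)\equiv \kappa\pmod 2$. Thus the whole problem reduces to showing that the addition $a\cdot 4^s+a$ is carry-free, i.e. $\kappa=0$, in which case $\omega_4(av)=2\,\omega_4(a)\equiv 0\pmod 2$ and $av\in T_{(0,m)}$.

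The main obstacle is exactly this carry-free claim. It holds precisely when the nonzero digits of $a$ (sitting in the low positions) and those of $a\cdot 4^s$ (shifted up by $s$) occupy disjoint positions; concretely, when the top nonzero base-$4$ digit of $a$ lies below position $s$, so that the low block $a$ and the shifted block $a\cdot 4^s$ never collide. I would derive this disjointness from the size constraints relating $a$, $s$ and $x$, and treat the extremal value $a=4^x$ separately, where $av=4^{s+x}+4^x$ has exactly two nonzero digits and $\omega_4(av)=2$ outright. This is the delicate step, since a single carry already flips the parity of $\omega_4(av)$ and would destroy membership in $T_{(0,m)}$.

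Finally, the hypothesis that $m/\gcd(m,s)$ is odd is the input to Lemma~\ref{gcd(v,n)=1} with the even base $4$: it yields $\gcd(v,n)=\gcd(4^s+1,4^m-1)=1$, so that $v$ is a unit modulo $n$. Although this coprimality is not needed for the weight computation itself, it is what later licenses the multiplier $\mu_{v^{-1}}$ of Fact 2.2 to carry the set $\{av:1\le a\le 4^x\}$ onto the block of consecutive integers $\{1,2,\dots,4^x\}$, which is the form in which this lemma feeds the BCH bound.
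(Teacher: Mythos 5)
Your reduction is, in substance, the same computation the paper relies on: the paper's own ``proof'' of this lemma is a one-line deferral to the four case lemmas (Lemmas \ref{m=0(mod4)}, \ref{m mod 4 =1}, \ref{m mod 4 =2}, \ref{m mod 4 =3}), and each of those does exactly what you describe in your carry-free scenario --- it writes $av = a\cdot 4^s + a$, observes that the exponents appearing in the two sums are pairwise distinct, and concludes $\omega_4(av) = 2\,\omega_4(a)$. Your range check, your separate treatment of the extremal value $a=4^x$, and the packaging of the parity argument via the identity $\omega_4(b)+\omega_4(c)-\omega_4(b+c)=3\kappa$ are all correct.

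The gap is precisely the step you flag as delicate: the disjointness of the two digit blocks cannot be ``derived from the size constraints relating $a$, $s$ and $x$,'' because the stated constraint $s+x<m$ does not imply it --- that inequality only prevents wrap-around modulo $n$. What is actually needed is $x\leqslant s$, so that every $a<4^x$ has its nonzero digits in positions $\leqslant x-1<s$, below the shifted block. Without it the lemma as literally stated is false: take $m=5$, $s=1$, $x=3$, so $v=5$, $m/\gcd(m,s)=5$ is odd and $s+x=4<5$, yet $a=7=(13)_4$ gives $av=35=(203)_4$ with $\omega_4(35)=5$ odd, hence $35\notin T_{(0,5)}$. The statement survives in the paper only because it is never invoked in this generality: in all four instantiations one has $x\leqslant s$ (for instance $s=x=\frac{m-1}{2}$ when $m\equiv 1\pmod 4$, and $s=\frac{m+2}{2}>x=\frac{m-4}{2}$ when $m\equiv 2\pmod 4$), which is exactly what makes the blocks disjoint. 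To close your proof, add the hypothesis $x\leqslant s$; then $\kappa=0$ is immediate and the rest of your argument goes through. Your final remark is also right: the condition that $m/\gcd(m,s)$ is odd plays no role in the membership $av\in T_{(0,m)}$ itself and, via Lemma \ref{gcd(v,n)=1}, only licenses the multiplier argument feeding the BCH bound later.
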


\begin{proof}
	This proof can be divided into four parts as per
 the following lemmas.
\end{proof}

\begin{mylemma}\label{m=0(mod4)}
Let $m\equiv 4 \pmod{8} \geqslant 12$ and $v=4^{\frac{m-4}{2}}+1$. Then $$
	\left\{ av:1\leqslant a\leqslant 4^{\frac{m-4}{2}} \right\} \subset T_{\left( 0,m \right)}.
	$$

\end{mylemma}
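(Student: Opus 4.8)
The plan is to read off $w_4(av)$ directly from its base-$4$ expansion and show it is always even. Write $s=\frac{m-4}{2}$; since $m\equiv 4\pmod 8$ with $m\geqslant 12$, $s$ is a positive integer (in fact $s\equiv 0\pmod 4$ and $s\geqslant 4$), $v=4^{s}+1$, and $2s=m-4<m$. First I would check that no reduction modulo $n$ is needed, so that each $av$ is an honest integer whose base-$4$ expansion occupies only positions $0,\dots,2s$: for $1\leqslant a\leqslant 4^{s}$ we have $av\leqslant 4^{s}v=4^{m-4}+4^{(m-4)/2}<4^{m-3}<4^{m}-1=n$.

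The main step is the carry analysis of $av=a\cdot 4^{s}+a$. For $1\leqslant a\leqslant 4^{s}-1$ the digits of $a$ are supported on positions $0,\dots,s-1$, while $a\cdot 4^{s}$ is the same digit string shifted up to positions $s,\dots,2s-1$; these two blocks are disjoint, so the addition produces no carries and the digits of $av$ are simply two disjoint copies of the digits of $a$. Hence $w_4(av)=2\,w_4(a)$, which is even. The only value left is the right endpoint $a=4^{s}$, which I would dispatch by hand: there $av=4^{2s}+4^{s}$, whose nonzero digits sit at the disjoint positions $s$ and $2s$, so $w_4(av)=2$ (consistent with the same formula $2\,w_4(4^{s})=2$). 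In either case $w_4(av)\equiv 0\pmod 2$, i.e. $av\in T_{(0,m)}$, which is exactly the claim.

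I do not expect a genuine obstacle: the entire content is the observation that multiplication by $v=4^{s}+1$ merely duplicates the digit string of $a$ into a disjoint higher block, and the single care-point is the boundary $a=4^{s}$, where the two ``copies'' are the lone $1$'s at positions $s$ and $2s$. What the hypothesis $m\equiv 4\pmod 8$ actually buys is not this inclusion but the later BCH argument: writing $m=8k+4$ and $s=4k$ gives $\gcd(m,s)=4$, so $m/\gcd(m,s)=2k+1$ is odd, and Lemma~\ref{gcd(v,n)=1} (with base $4$) then forces $\gcd(v,n)=1$. This coprimality is what will allow the length-$4^{s}$ arithmetic progression $\{av:1\leqslant a\leqslant 4^{s}\}$ to be carried, via the multiplier $\mu_{v}$ of Fact 2.2 (which scales the defining set by $v^{-1}$ and hence sends $av\mapsto a$), onto a run of $4^{s}$ consecutive elements in the defining set of an equivalent code, so that the BCH bound applies.
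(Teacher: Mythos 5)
Your proof is correct and follows essentially the same route as the paper's: split off the boundary case $a=4^{(m-4)/2}$, and for $1\leqslant a\leqslant 4^{(m-4)/2}-1$ observe that $av=a\cdot 4^{(m-4)/2}+a$ is a carry-free sum of two digit-disjoint copies of $a$, so $w_4(av)=2w_4(a)$ is even. Your added remarks — the explicit check that $av<n$ so no reduction modulo $n$ occurs, and the observation that the hypothesis $m\equiv 4\pmod 8$ is needed not for the inclusion but for $\gcd(v,n)=1$ in the subsequent BCH argument — are accurate and, if anything, make the argument slightly more complete than the paper's.
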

\begin{proof}
	In this case, from Lemma \ref{gcd(v,n)=1}, it can be seen that $\gcd\left( v,n \right) =1$. Suppose that $a=4^{{\frac{m-4}{2}} }$, $w_{4}\left( av\right) =w_{4}\left( v\right) =2$, so
	$av\in T_{\left( 0,m \right)}$. Suppose $1\leqslant a\leqslant 4^{{\frac{m-4}{2}} }-1$, write $
	a=\sum_{i=0}^{\frac{m-6}{2}}{a_i4^i}$, where $a_i\in \left\{ 0,1,2,3 \right\} $.
So we have
	\begin{align}
		\nonumber	av=\left( \sum_{i=0}^{\frac{m-6}{2}}{a_i4^i} \right) \left( 4^{\frac{m-4}{2}}+1\right) =\sum_{i=0}^{\frac{m-6}{2}}{a_i4^{i+{\frac{m-4}{2}}}}+\sum_{i=0}^{{\frac{m-6}{2}}}{a_i4^i}.
	\end{align}
	Since the exponents in these two sums are pairwise distincts we have
	\begin{align}
		\nonumber		w_4\left( av \right) =w_4\left( \sum_{i=0}^{\frac{m-6}{2}}{a_i4^{i+{\frac{m-4}{2}}}}+\sum_{i=0}^{{\frac{m-6}{2}}}{a_i4^i} \right)
		=w_4\left( \sum_{i=0}^{\frac{m-6}{2}}{a_i4^{i+{\frac{m-4}{2}}}} \right) +w_4\left( \sum_{i=0}^{\frac{m-6}{2}}{a_i4^i} \right) 
		=2w_4\left( a \right) ,
	\end{align}  so $av\in T_{\left( 0,m \right)}$.
	
\end{proof}

\begin{mylemma}\label{m mod 4 =1}
Let $m\equiv 1 \pmod{4}\geqslant 5$ and $v=4^{\frac{m-1}{2}}+1$. Then $$
	\left\{ av:1\leqslant a\leqslant 4^{\frac{m-1}{2}} \right\} \subset T_{(0,m)}.$$
\end{mylemma}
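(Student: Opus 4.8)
The plan is to follow the template of Lemma \ref{m=0(mod4)} almost verbatim, since the multiplier $v = 4^{(m-1)/2}+1$ has the same two-term additive shape, now with shift parameter $s = (m-1)/2$ satisfying $2s = m-1 < m$.

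First I would check that $\gcd(v,n)=1$, which guarantees that multiplication by $v$ permutes the nonzero residues and spares us any reduction modulo $n$. To apply Lemma \ref{gcd(v,n)=1} to $\gcd(4^{(m-1)/2}+1,\,4^m-1)$ I identify its base with $4$, the exponent on the ``$+1$'' term with $(m-1)/2$, and the exponent on the ``$-1$'' term with $m$; the hypothesis then asks that $m/\gcd(m,(m-1)/2)$ be odd. Because $m\equiv 1\pmod 4$ is odd and any common divisor of $m$ and $(m-1)/2$ divides $m-(m-1)=1$, we have $\gcd(m,(m-1)/2)=1$, so this quotient equals $m$, which is odd; as $4$ is even, Lemma \ref{gcd(v,n)=1} yields $\gcd(v,n)=1$.

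Next I would split the range at the top endpoint. For $a=4^{(m-1)/2}$ one has $av=4^{m-1}+4^{(m-1)/2}$, a number with two distinct nonzero base-$4$ digits, so $w_4(av)=2$ and $av\in T_{(0,m)}$. For $1\leqslant a\leqslant 4^{(m-1)/2}-1$ I write $a=\sum_{i=0}^{(m-3)/2}a_i4^i$ and expand
$$av=\sum_{i=0}^{(m-3)/2}a_i4^{\,i+(m-1)/2}+\sum_{i=0}^{(m-3)/2}a_i4^i.$$
The exponents of the first sum range over $[(m-1)/2,\,m-2]$ and those of the second over $[0,\,(m-3)/2]$, so the two digit-blocks are disjoint; hence $w_4(av)=2w_4(a)$ is even and $av\in T_{(0,m)}$. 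Since even the largest product, namely $(4^{(m-1)/2}-1)v=4^{m-1}-1$, is below $n$, every $av$ already lies in $\{1,\dots,n-1\}$ and no reduction is required.

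The only step needing genuine attention is the disjointness of the two digit-blocks, which is precisely what turns $w_4(av)$ into $2w_4(a)$; it rests on the gap $(m-1)/2>(m-3)/2$, i.e.\ on $2s=m-1<m$, which is exactly the role played by the choice of $s$ here and the analogue of the disjointness observation in Lemma \ref{m=0(mod4)}. Everything else is routine arithmetic of base-$4$ expansions.
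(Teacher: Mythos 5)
Your proposal is correct and follows essentially the same route as the paper's own proof: establishing $\gcd(v,n)=1$ via Lemma \ref{gcd(v,n)=1}, treating the endpoint $a=4^{\frac{m-1}{2}}$ separately, and expanding $av$ into two digit-blocks with disjoint exponent ranges so that $w_4(av)=2w_4(a)$ is even. Your explicit check that every product $av$ is already below $n$ (so no reduction modulo $n$ is needed) is a detail the paper leaves implicit, but otherwise the two arguments coincide.
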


\begin{proof}
	In this case, by Lemma \ref{gcd(v,n)=1}, we can  get that $\gcd(v,n)=1$.  Suppose that $a=4^{\frac{m-1}{2}}$, $w_4\left( av \right) =w_4\left( v \right) =2$, so $av\in T_{\left( 0,m \right)}$. Suppose $1\leqslant a\leqslant 4^{\frac{m-1}{2}}-1$, write $a=\sum_{i=0}^{\frac{m-3}{2}}{a_i4^i}$, where $a_i\in \left\{ 0,1,2,3 \right\} $.
	So we have
\begin{align}
	\nonumber av&=\left( \sum_{i=0}^{\frac{m-3}{2}}{a_i4^i} \right) \left( 4^{\frac{m-1}{2}}+1 \right) 
=\sum_{i=0}^{\frac{m-3}{2}}{a_i\left( 4^i+4^{i+\frac{m-1}{2}} \right)}
=\sum_{i=0}^{\frac{m-3}{2}}{a_i4^i+\sum_{i=0}^{\frac{m-3}{2}}{a_i4^{i+\frac{m-1}{2}}}}.
\end{align}
It is clear that
\begin{align}
	\nonumber		w_4\left( av \right) &=
	w_4\left( \sum_{i=0}^{\frac{m-3}{2}}{a_i4^{i+\frac{m-1}{2}}}+\sum_{i=0}^{\frac{m-3}{2}}{a_i4^i} \right) 
	=w_4\left( \sum_{i=0}^{\frac{m-3}{2}}{a_i4^{i+\frac{m-1}{2}}} \right) +w_4\left( \sum_{i=0}^{\frac{m-3}{2}}{a_i4^i} \right) 
	=2w_4\left( a \right) ,
\end{align}  so $av\in T_{\left( 0,m \right)}$.
\end{proof}

    \begin{mylemma}\label{m mod 4 =2}
    Let $m\equiv 2 \pmod{4} \geqslant 8$ and $v=4^{\frac{m+2}{2}}+1$. Then $$
    	\left\{ av:1\leqslant a\leqslant 4^{\frac{m-4}{2}} \right\} \subset T_{(0,m)}.$$
    	
    \end{mylemma}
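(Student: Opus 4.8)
The plan is to mirror the digit-counting argument used in Lemmas \ref{m=0(mod4)} and \ref{m mod 4 =1}: the multiplier $v=4^{(m+2)/2}+1$ is a two-term $4$-adic integer, so multiplying $a$ by $v$ produces the digit string of $a$ together with a shifted copy of it, and as long as these two blocks occupy disjoint positions none of which reaches position $m$, no carrying and no reduction modulo $n=4^m-1$ takes place. Hence $w_4(av)=2w_4(a)$ is even and $av\in T_{(0,m)}$. The whole proof is therefore a matter of exponent bookkeeping once the coprimality input is recorded.

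First I would record $\gcd(v,n)=1$, which is what later licenses the multiplier that turns this arithmetic run of indices into a genuine BCH gap. Writing $m=4t+2$, one has $(m+2)/2=2(t+1)$ and $m=2(2t+1)$, so $\gcd((m+2)/2,m)=2\gcd(t+1,2t+1)=2$, whence $m/\gcd((m+2)/2,m)=2t+1$ is odd. Lemma \ref{gcd(v,n)=1}, applied with base $4$ and exponents $(m+2)/2$ and $m$, then yields $\gcd(v,n)=1$ because $4$ is even.

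Next I would dispose of the extreme value $a=4^{(m-4)/2}$ separately: here $av=4^{m-1}+4^{(m-4)/2}$ is just $v$ pushed up by $(m-4)/2$ places, with top digit at position $m-1<m$, so $w_4(av)=w_4(v)=2$ and $av\in T_{(0,m)}$. For the generic range $1\le a\le 4^{(m-4)/2}-1$ I would expand $a=\sum_{i=0}^{(m-6)/2}a_i4^{i}$ with $a_i\in\{0,1,2,3\}$ and split $av=\sum_{i}a_i4^{\,i+(m+2)/2}+\sum_{i}a_i4^{\,i}$; the shifted block then occupies positions $(m+2)/2$ through $m-2$ while the unshifted block occupies positions $0$ through $(m-6)/2$, so the supports are disjoint and additivity of $w_4$ gives $w_4(av)=2w_4(a)$, again even. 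Since in all cases $av\le 4^{m-1}+4^{(m-4)/2}<n$, every such $av$ is a legitimate element of $\{1,\dots,n-1\}$.

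The one point that genuinely needs care—and the reason the range of $a$ is cut off at $4^{(m-4)/2}$ rather than something larger—is precisely this exponent bookkeeping. Because the shift is now by $(m+2)/2=m/2+1$, one place higher than in the $m\equiv 1\pmod 4$ case, the admissible length of $a$ must shrink so that the top shifted position stays at $m-2$ (respectively $m-1$ at the endpoint) and never reaches $m$; otherwise a factor $4^{m}\equiv 1\pmod n$ would fold back, destroying the clean identity $w_4(av)=2w_4(a)$. I therefore expect the crux to be verifying the two inequalities $(m-6)/2<(m+2)/2$ (disjointness of the blocks) and $(m+2)/2+(m-6)/2=m-2<m$ (no overflow past the top), after which the conclusion $av\in T_{(0,m)}$ is immediate.
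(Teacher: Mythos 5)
Your proposal is correct and follows essentially the same route as the paper's proof: appeal to Lemma \ref{gcd(v,n)=1} for $\gcd(v,n)=1$, separate treatment of the endpoint $a=4^{\frac{m-4}{2}}$, and the splitting of $av$ into two digit blocks with disjoint supports to get $w_4(av)=2w_4(a)$, hence membership in $T_{(0,m)}$. If anything, you are more careful than the paper, which leaves implicit the verification that $m/\gcd\left( \frac{m+2}{2},m \right)$ is odd, the disjointness/no-overflow bookkeeping ($m-2<m$), and the range check $av<n$.
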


         \begin{proof}
    In this case, by Lemma \ref{gcd(v,n)=1}, we can easily get that $\gcd(v,n)=1$. Suppose that $a=4^{\frac{m-4}{2}}$, $w_4\left( av \right) =w_4\left( v \right) =2$, so $av\subset T_{(0,m)}$. Suppose $1\leqslant a\leqslant 4^{\frac{m-4}{2}}-1$, write $a=\sum_{i=0}^{\frac{m-6}{2}}{a_i4^i}$, where $a_i\in \left\{ 0,1,2,3 \right\} $.
    So we have
\begin{align}
		\nonumber  av&=\left(\sum_{i=0}^{\frac{m-6}{2}}{a_i4^i} \right)  \left( 4^{\frac{m+2}{2}}+1 \right) 
	=\sum_{i=0}^{\frac{m-6}{2}}{a_i\left( 4^i+4^{i+\frac{m+2}{2}} \right)}
	=\sum_{i=0}^{\frac{m-6}{2}}{a_i4^i}+a_i4^{i+\frac{m+2}{2}}.
\end{align}
It is clear that
\begin{align}
	\nonumber		w_4\left( av \right) &=
	w_4\left( \sum_{i=0}^{\frac{m-6}{2}}{a_i4^{i+\frac{m+2}{2}}}+\sum_{i=0}^{\frac{m-6}{2}}{a_i4^i} \right) 
	=w_4\left( \sum_{i=0}^{\frac{m-6}{2}}{a_i4^{i+\frac{m+2}{2}}} \right) +w_4\left( \sum_{i=0}^{\frac{m-6}{2}}{a_i4^i} \right) 
	=2w_4\left( a \right) ,
\end{align}  so $av\in T_{\left( 0,m \right)}$.\end{proof}

	\begin{mylemma} \label{m mod 4 =3}
	Let $m\equiv 3 \pmod{4}\geqslant 7$ and $v=4^{\frac{m+1}{2}}+1$. Then  $$
		\left\{ av:1\leqslant a\leqslant 4^{\frac{m-3}{2}} \right\} \subset T_{(0,m)}. $$
	   \end{mylemma}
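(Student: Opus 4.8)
The plan is to follow exactly the template established in Lemmas \ref{m=0(mod4)}, \ref{m mod 4 =1}, and \ref{m mod 4 =2}, since the statement for $m\equiv 3\pmod 4$ has the identical shape: a product $av$ with $v=4^{s}+1$ for $s=\frac{m+1}{2}$, and a range $1\leqslant a\leqslant 4^{\frac{m-3}{2}}$. First I would verify that $\gcd(v,n)=1$ by appealing to Lemma \ref{gcd(v,n)=1}. Here $v=4^{\frac{m+1}{2}}+1$, so in the notation of that lemma we take $a=4$ (even) and the exponent $\frac{m+1}{2}$; one checks that $\frac{m+1}{2}$ over its gcd with $m$ is odd (when $m\equiv 3\pmod 4$, $\frac{m+1}{2}$ is even and $m$ is odd, so $\gcd(m,\frac{m+1}{2})$ and the resulting quotient need a short check), which forces the gcd to equal $1$ and hence $\gcd(v,n)=1$ as required for the indices $av$ to be genuine nonzero residues modulo $n$.

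Next I would handle the boundary case $a=4^{\frac{m-3}{2}}$ separately, exactly as the earlier proofs do: since multiplying by a power of $4$ merely cyclically shifts the base-$4$ digits, $w_4(av)=w_4(v)=2$, which is even, so $av\in T_{(0,m)}$. For the generic case $1\leqslant a\leqslant 4^{\frac{m-3}{2}}-1$ I would write the base-$4$ expansion $a=\sum_{i=0}^{\frac{m-5}{2}}a_i4^i$ with $a_i\in\{0,1,2,3\}$, and then expand
\begin{align}
\nonumber av=\left(\sum_{i=0}^{\frac{m-5}{2}}a_i4^i\right)\left(4^{\frac{m+1}{2}}+1\right)=\sum_{i=0}^{\frac{m-5}{2}}a_i4^{i+\frac{m+1}{2}}+\sum_{i=0}^{\frac{m-5}{2}}a_i4^i.
\end{align}
The crucial structural point, which is the only real content, is that the two index sets $\{i:0\leqslant i\leqslant \frac{m-5}{2}\}$ and $\{i+\frac{m+1}{2}:0\leqslant i\leqslant \frac{m-5}{2}\}$ are disjoint, so the two partial sums occupy disjoint digit positions and there are no carries. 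The lower block runs over exponents $0,\dots,\frac{m-5}{2}$ and the upper block over $\frac{m+1}{2},\dots,\frac{m-5}{2}+\frac{m+1}{2}=m-2$; since $\frac{m-5}{2}<\frac{m+1}{2}$ the blocks do not overlap, and the top exponent $m-2$ stays below $m$, so all digits fit inside an $m$-digit window and $av\leqslant n-1$.

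With disjointness in hand, the $4$-weight is additive across the two blocks, giving
\begin{align}
\nonumber w_4(av)=w_4\!\left(\sum_{i=0}^{\frac{m-5}{2}}a_i4^{i+\frac{m+1}{2}}\right)+w_4\!\left(\sum_{i=0}^{\frac{m-5}{2}}a_i4^i\right)=2\,w_4(a),
\end{align}
which is even, so $av\in T_{(0,m)}$ for every $a$ in the range, completing the proof. I do not expect any serious obstacle: the argument is a verbatim adaptation of the three preceding lemmas, and the only spots demanding care are bookkeeping items, namely checking the gcd hypothesis of Lemma \ref{gcd(v,n)=1} for the specific exponent $\frac{m+1}{2}$, confirming the upper index of the digit expansion is $\frac{m-5}{2}$ (so that $s+x=\frac{m+1}{2}+\frac{m-3}{2}=m-1<m$ as in the master Lemma), and verifying that the top exponent $m-2<m$ so that no wraparound modulo $n=4^m-1$ occurs. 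These are exactly the congruence-class verifications the introduction flags as the technical heart of the paper, but for this single residue class they reduce to elementary arithmetic.
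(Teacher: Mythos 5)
Your proof follows exactly the paper's argument: the same boundary case $a=4^{\frac{m-3}{2}}$ handled by digit shifting, the same expansion of $av$ into two blocks of base-$4$ digits occupying disjoint positions (with top exponent $m-2<m$, so no reduction modulo $n$ occurs), and the same additivity of $w_4$ giving $w_4(av)=2w_4(a)$, hence even. The one step you garble is the instantiation of Lemma \ref{gcd(v,n)=1}: in that lemma the quotient required to be odd is $l/\gcd(m,l)$ where $l$ is the exponent of the term $a^l-1$; here that term is $n=4^m-1$, so the check is $m/\gcd\bigl(m,\tfrac{m+1}{2}\bigr)=m$, which is odd, using $\gcd\bigl(\tfrac{m+1}{2},m\bigr)=1$ (any common divisor divides both $m+1$ and $m$). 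Your stated check --- that $\tfrac{m+1}{2}$ divided by its gcd with $m$ is odd --- is literally false for $m\equiv 3\pmod 4$, since that quotient equals $\tfrac{m+1}{2}$, which is even, as your own parenthetical observes; the conclusion $\gcd(v,n)=1$ survives only once the two exponents' roles in the lemma are assigned correctly. This is a one-line fix, and since the paper skips this verification entirely, your write-up is otherwise at least as careful as the original.
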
	
        	\begin{proof}
		In this case, by Lemma \ref{gcd(v,n)=1}, we can obtain that $\gcd(v,n)=1$. Suppose that $a=4^{\frac{m-3}{2}}$, $w_4\left( av \right) =w_4\left( v \right) =2$, so $av\in T\left( 0,m \right)	$. Suppose $1\leqslant a\leqslant 4^{\frac{m-3}{2}}-1$, write $a=\sum_{i=0}^{\frac{m-5}{2}}{a_i4^i}$, so that
\begin{align}
	\nonumber  	av&=\left( \sum_{i=0}^{\frac{m-5}{2}}{a_i4^i}\right)  \left( 4^{\frac{m+1}{2}}+1 \right) 
	=\sum_{i=0}^{\frac{m-5}{2}}{a_i\left( 4^i+4^{i+\frac{m+1}{2}} \right)}
	=
	\sum_{i=0}^{\frac{m-5}{2}}{a_i4^i}+a_i4^{i+\frac{m+1}{2}}.
\end{align}
It is clear that
\begin{align}
	\nonumber		w_4\left( av \right) &=
	w_4\left( 	\sum_{i=0}^{\frac{m-5}{2}}{a_i4^i}+a_i4^{i+\frac{m+1}{2}} \right) 
	=w_4\left( \sum_{i=0}^{\frac{m-5}{2}}{a_i4^{i+\frac{m+1}{2}}} \right) +w_4\left( \sum_{i=0}^{\frac{m-5}{2}}{a_i4^i} \right) 
	=2w_4\left( a \right) ,
\end{align}  so $av\in T_{\left( 0,m \right)}$.
\end{proof}

	\section{Parameters of Two Quaternary Cyclic Codes}\label{Pre}
	In this section, we will give the dimension and derive lower bounds on the minimum distance for the two quaternary cyclic codes defined above.
		\newtheorem{mylem}{Lemma}[section]

	\begin{mylem}\label{m is odd ,the size of defining set}
		If $m$ is odd and $m\geqslant 3$, then we have $\left| T_{(0,m)} \right|=\left| T_{(1,m)} \right|=2^{2m-1}-1$.
	\end{mylem}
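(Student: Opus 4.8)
The plan is to count directly by the parity of the base-$4$ digit sum, treating first the full range $0\leqslant i\leqslant 4^m-1$ and then correcting for the two excluded endpoints. Identify each $i$ in $[0,4^m-1]$ with its digit vector $(i_0,\dots,i_{m-1})$, so that there are $4^m=2^{2m}$ such strings. Let $E$ and $O$ denote the number of $i$ in this range with $w_4(i)$ even and odd, respectively, so that $E+O=4^m$.

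My first step is to show $E=O=2^{2m-1}$, which in fact holds for every $m\geqslant 1$, not merely for $m$ odd. The quickest route is the generating-function identity $\sum_{i=0}^{4^m-1} z^{w_4(i)}=\prod_{j=0}^{m-1}(1+z+z^2+z^3)=(1+z+z^2+z^3)^m$: evaluating at $z=1$ recovers $E+O=4^m$, while evaluating at $z=-1$ gives $E-O=(1-1+1-1)^m=0$. Hence $E=O=4^m/2=2^{2m-1}$. Equivalently, one can argue combinatorially: each base-$4$ digit takes two even values $\{0,2\}$ and two odd values $\{1,3\}$, so appending a digit preserves the balance between the two parity classes, and an immediate induction on $m$ (with base case $m=1$, where $E_1=O_1=2$) yields $E=O=2^{2m-1}$.

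Next I would pass from the full range to the set $\{1,\dots,n-1\}$ with $n=4^m-1$, which amounts to discarding the two endpoints $i=0$ and $i=n=4^m-1$. The digit sum of $0$ is $0$, which is even, so removing it decreases the even count by one. The base-$4$ expansion of $4^m-1$ is $(3,3,\dots,3)$, so $w_4(4^m-1)=3m$, whose parity equals that of $m$. This is precisely where the hypothesis enters: since $m$ is odd, $3m$ is odd, so removing $4^m-1$ decreases the odd count by one. Therefore $\left|T_{(0,m)}\right|=E-1=2^{2m-1}-1$ and $\left|T_{(1,m)}\right|=O-1=2^{2m-1}-1$, which is the claim.

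There is essentially no hard step here: the only point demanding care is the parity bookkeeping at the two endpoints, and in particular the observation that $w_4(4^m-1)=3m$ is odd exactly because $m$ is odd. For even $m$ the two discarded endpoints would both have even digit sum, and the split would instead be $2^{2m-1}-2$ against $2^{2m-1}$; so the clean symmetric conclusion $\left|T_{(0,m)}\right|=\left|T_{(1,m)}\right|$ genuinely relies on the oddness of $m$.
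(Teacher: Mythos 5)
Your proof is correct, and its skeleton is the same as the paper's: count all $4^m$ base-$4$ digit strings according to the parity of their digit sum, then correct for the two excluded endpoints $0$ and $n$. The only real difference is the counting device. Where you evaluate the generating function $(1+z+z^2+z^3)^m$ at $z=\pm 1$ (or run the balance induction), the paper argues directly that once the first $m-1$ digits are fixed, the last digit must have a prescribed parity and so takes exactly two of the four values in $\{0,1,2,3\}$, giving $2\cdot 4^{m-1}=2^{2m-1}$ strings in each parity class; these are interchangeable one-line arguments. What your write-up adds is care at the boundary: the paper explicitly removes only the all-zero vector (giving $\left|T_{(0,m)}\right|=2\cdot 4^{m-1}-1$) and leaves the count of $T_{(1,m)}$ implicit, whereas you make explicit that the other discarded endpoint $n=4^m-1$ has $w_4(n)=3m$ odd precisely because $m$ is odd, so it is the odd class that loses its one element --- which is exactly where the hypothesis enters. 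Your closing observation about even $m$ (the split becomes $2^{2m-1}-2$ against $2^{2m-1}$) also recovers, as a free byproduct, the paper's companion lemma for the even case.
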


   \begin{proof}
   	Identifying integers $\le n$ with $m$-tuples of elements of $Z_{4}=\{0,1,2,3\}\subset \Z$, we have
   	\[ T_{\left( 0,m \right)} =\{ \boldsymbol{x} \in (Z_{4})^m \mid \sum\limits_{i=1}^m x_i \equiv 0\pmod{2}\},\]where $Z_{4}$ is the set of four digits of the base $4$ expansion, and $(Z_{4})^m$ is the Cartesian product $m$ times.
   	The first $m-1$ elements $x_i$ being given, the value of $x_m$ is determined modulo $2$, hence it can take at most two distinct values.
   	In total we obtain $2\cdot 4^{m-1}-1$ vectors, excluding the all-zero vector.
   \end{proof}

	\begin{mylem}\label{m is even ,the size of defining set}
		
	If $m$ is even and $m\geqslant 2$, then we have $\left| T_{(0,m)} \right|=2^{2m-1}-2$ and $\left| T_{(1,m)} \right|=2^{2m-1}$.
	
   \end{mylem}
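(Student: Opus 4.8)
The plan is to reuse the counting argument from the proof of the preceding lemma, but to keep careful track of which boundary integers must be discarded and to exploit the parity of $m$ at exactly one point.

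First I would identify each integer $i$ with $0\leqslant i\leqslant n=4^m-1$ with its base-$4$ digit vector $\boldsymbol{x}=(x_1,\dots,x_m)\in (Z_4)^m$, so that $w_4(i)=\sum_{j=1}^m x_j$, precisely as in the previous lemma. Inside the full Cartesian product $(Z_4)^m$, which has $4^m$ elements, let $N_0$ and $N_1$ denote the number of vectors whose digit sum is even and odd, respectively. The same observation as before applies: once $x_1,\dots,x_{m-1}$ are fixed, the residue of $x_m$ modulo $2$ is prescribed and is realized by exactly two of the four available digits. Hence $N_0=N_1=2\cdot 4^{m-1}=2^{2m-1}$. (Equivalently, the digit sum is even iff an even number of coordinates are odd, and each coordinate is odd for exactly two of its four possible values.)

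Next I would locate the two vectors that fall outside the admissible index range $1\leqslant i\leqslant n-1$. These are $i=0$, corresponding to $(0,\dots,0)$ with $w_4=0$, and $i=n=4^m-1$, corresponding to $(3,\dots,3)$ with $w_4=3m$. The decisive point, and the only place where the hypothesis that $m$ is even enters, is that $3m$ is now even. Consequently both excluded vectors have even digit sum, so both must be removed from the count $N_0$, while neither is removed from $N_1$. This yields $|T_{(0,m)}|=N_0-2=2^{2m-1}-2$ and $|T_{(1,m)}|=N_1=2^{2m-1}$, as claimed.

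There is no serious obstacle here beyond the bookkeeping; the whole argument pivots on the single elementary fact $m\equiv 0\pmod 2\Rightarrow 3m\equiv 0\pmod 2$. It is instructive to contrast this with the odd case: when $m$ is odd, $3m$ is odd, so the all-threes vector lies in the odd-weight class rather than the even one, one excluded vector falls into each parity class, and both cardinalities drop by exactly one, giving the symmetric value $2^{2m-1}-1$ of the preceding lemma.
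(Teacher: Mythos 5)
Your proof is correct and takes exactly the paper's route: the paper's own (one-line) proof says the count is analogous to the odd case except that $n$ must also be excluded from $T_{(0,m)}$, which is precisely your observation that the all-threes vector has weight $3m\equiv 0\pmod 2$ when $m$ is even. Your write-up simply makes the paper's implicit bookkeeping (the bijection with $(Z_4)^m$, the count $2\cdot 4^{m-1}=2^{2m-1}$ per parity class, and the two excluded boundary indices) fully explicit.
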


\begin{proof}
	  The proof is analogous except that this time we must exclude $n$ from the count of $T_{\left( 0,m\right) }.$
	\end{proof}

\newtheorem{theorem}{Theorem}[section]
	
Now we are ready to consider the parameter of cyclic codes $\C_{(0,m)}$ and  $\C_{(1,m)}$. We want to construct a cyclic code over $\mathbb{F} _4$, of parameters $[n,k,d]$ where the dimension $k$ is around half of the length $n$, while we want the distance $d$ as large as possible. With this motivation in mind, we construct different sets $S$ under different $m$'s to make sure that
$$
S\subset T_{\left( i,m \right)},
$$
or $$
vS\subset T_{\left( i,m \right)},
$$
where $v$ denotes an integer that is invertible modulo $n$. Since the size of $S$ determines the lower bounds of minimum distance of cyclic codes, we strive to maximize the size of $S.$

\subsection {The Case that $m$ is Odd}

	\newtheorem{mythm}{Theorem}[subsection]

\begin{mythm}\label{m is odd,dimension}
	If $m$ is odd and $m\geqslant 1$, then the cyclic codes $\C_{(0,m)}$ and $\C_{(1,m)}$ have the same dimension, i.e., $2^{2m-1}$.
\end{mythm}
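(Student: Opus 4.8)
The plan is to read off the dimension directly from the cardinality of the defining set, using the standard fact that a cyclic code of length $n$ over $\F_4$ with defining set $T$ has dimension $n-|T|$. Indeed, by Fact 2.1 its generator polynomial $g_{(i,m)}(x)=\prod_{j\in T_{(i,m)}}(x-\alpha^j)$ has degree exactly $|T_{(i,m)}|$, and the dimension of the associated ideal of $R_n$ is $n-\deg g_{(i,m)}$. So the entire computation reduces to evaluating $k=n-|T_{(i,m)}|$.

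First I would confirm that each $T_{(i,m)}$ really is a valid defining set, i.e.\ a union of $4$-cyclotomic cosets modulo $n=4^m-1$. This holds because multiplication by $4$ modulo $4^m-1$ cyclically permutes the $4$-adic digits of an integer and therefore preserves the digit sum $w_4$; hence membership in $T_{(i,m)}$ is constant on each coset $C_j^{(4,n)}$. This is precisely the Frobenius observation already invoked in Section 3 to see that $g_{(i,m)}(x)\in\F_4[x]$, and it guarantees that the dimension formula above is legitimate.

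Next, for $m\geqslant 3$ I would simply quote Lemma \ref{m is odd ,the size of defining set}, which gives $|T_{(0,m)}|=|T_{(1,m)}|=2^{2m-1}-1$. Substituting into $k=n-|T_{(i,m)}|$ yields
\[
k=(4^m-1)-(2^{2m-1}-1)=4^m-2^{2m-1}=2^{2m}-2^{2m-1}=2^{2m-1},
\]
which is the asserted value and, crucially, is identical for $i=0$ and $i=1$, so the two codes do share the same dimension.

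Finally, since Lemma \ref{m is odd ,the size of defining set} assumes $m\geqslant 3$, the remaining odd case $m=1$ must be handled directly: here $n=3$, and from $w_4(1)=1$ and $w_4(2)=2$ one reads off $T_{(0,1)}=\{2\}$ and $T_{(1,1)}=\{1\}$, so both codes have dimension $3-1=2=2^{2\cdot 1-1}$, in agreement with the formula. I do not expect any genuine obstacle in this argument; the only two points deserving attention are verifying the Frobenius-closure of $T_{(i,m)}$ (so that $\dim=n-|T|$ is applicable) and disposing of the boundary value $m=1$, which lies outside the hypothesis of Lemma \ref{m is odd ,the size of defining set}.
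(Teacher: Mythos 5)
Your proposal is correct and follows essentially the same route as the paper: the dimension is read off as $n-|T_{(i,m)}|$ using Lemma \ref{m is odd ,the size of defining set}. In fact you are more careful than the paper's own two-line proof, since you verify that $T_{(i,m)}$ is a union of $4$-cyclotomic cosets (so the formula $\dim = n - |T|$ applies) and you separately settle the boundary case $m=1$, which lies outside the hypothesis $m\geqslant 3$ of the lemma but inside the hypothesis $m\geqslant 1$ of the theorem --- a gap the paper silently ignores.
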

\begin{proof}
		By Lemma \ref{m is odd ,the size of defining set}, the size of the defining set follows. Then the dimension of cyclic code is equal to
	$n-\left| T_{\left( i,m \right)} \right|$.
\end{proof}

\begin{mythm}\label{m is odd,  a pair of odd-like duadic codes}
	If $m$ is odd, then the cyclic codes $\C_{\left( 0,m \right)}$
	 and $\C_{\left( 1,m \right)}$ form a pair of odd-like duadic codes.
\end{mythm}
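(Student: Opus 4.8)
The plan is to verify the two conditions of Fact~\ref{duadic code definition} (the characterization of odd-like duadic codes) for the pair $\C_{(0,m)}$ and $\C_{(1,m)}$, using the multiplier $\mu_{-1}$ as the splitting. First I would confirm condition (1), which asks that the defining sets partition $\{1,2,\dots,n-1\}$. This is immediate from the construction: by definition $T_{(0,m)}$ collects the indices $i$ with $w_4(i)$ even and $T_{(1,m)}$ those with $w_4(i)$ odd, so every $i$ in $\{1,\dots,n-1\}$ lands in exactly one of the two sets. Thus $T_{(0,m)}\cup T_{(1,m)}=\{1,\dots,n-1\}$ and $T_{(0,m)}\cap T_{(1,m)}=\oslash$ hold with no real work, and I would note that $0\notin T_{(0,m)}$ and $0\notin T_{(1,m)}$ since the index range starts at $1$.

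The substance of the proof lies in condition (2): exhibiting a multiplier $\mu_b$ with $T_{(0,m)}\mu_b=T_{(1,m)}$ and $T_{(1,m)}\mu_b=T_{(0,m)}$. I would take $b=-1$, i.e. the multiplier sending $i\mapsto -i \equiv n-i \pmod n$. The key step is then to understand how the $4$-weight behaves under $i\mapsto n-i$. Writing $n=4^m-1=\sum_{j=0}^{m-1}3\cdot 4^j$, I observe that if $i$ has $4$-adic digits $(i_{m-1},\dots,i_0)$, then $n-i$ has digits $(3-i_{m-1},\dots,3-i_0)$, since each digit of $i$ lies in $\{0,1,2,3\}$ and there is no borrowing. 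Hence
\[
w_4(n-i)=\sum_{j=0}^{m-1}(3-i_j)=3m-w_4(i).
\]
Because $m$ is odd, $3m$ is odd, so $w_4(n-i)=3m-w_4(i)$ has parity opposite to that of $w_4(i)$. Therefore $\mu_{-1}$ sends every index of even $4$-weight to one of odd $4$-weight and vice versa, giving exactly $T_{(0,m)}\mu_{-1}=T_{(1,m)}$ and $T_{(1,m)}\mu_{-1}=T_{(0,m)}$.

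The main obstacle, such as it is, is verifying the digitwise complementation formula $w_4(n-i)=3m-w_4(i)$ cleanly and confirming it is exactly the parity-reversing property that the oddness of $m$ supplies; this is where the hypothesis that $m$ be odd is essential, and it is worth remarking that for even $m$ the same computation gives $3m$ even, so $\mu_{-1}$ would preserve rather than swap the parity classes, which is precisely why the even case yields LCD codes instead of duadic codes (as foreshadowed in the introduction). With both conditions of Fact~\ref{duadic code definition} established, I would conclude that $\C_{(0,m)}$ and $\C_{(1,m)}$ form a pair of odd-like duadic codes, completing the proof.
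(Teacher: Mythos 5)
Your proof is correct and follows essentially the same route as the paper's: both verify the two conditions of Fact~\ref{duadic code definition} using the multiplier $\mu_{-1}$, with the key computation being the digit-complementation identity $w_4(n-i)=3m-w_4(i)$ and the observation that $3m$ is odd exactly when $m$ is odd, so the parity classes swap. Your write-up is if anything a bit more explicit than the paper's (spelling out the partition condition, that $0\notin T_{(i,m)}$, and why the even-$m$ case fails), but the underlying argument is identical.
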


\begin{proof}
	For any $1\leqslant i\leqslant n-1$, where $n=4^m-1$. If we write $i$ as $i=\sum_{i=0}^{m-1}{a_i4^i},$ then $n-i=\sum_{i=0}^{m-1}{\left( 3-a_i \right) 4^i}.$
	So we have
     $ w_4\left( n-i \right) =3m-w_4\left( x \right).$ Since $m$ is odd, if $x\in T_{(0,m)},$ then $n-x\in T_{(1,m)} .$ Also, if $x\in T_{(1,m)},$ then $n-x\in T_{(0,m)} .$ Therefore we have the splitting $
     \mu =-1$ such that $$
     T_{(0,m)} =-T_{(1,m)}$$ and $$
     T_{(1,m)} =-T_{(0,m)}.$$
      From Fact \ref{duadic code definition}, the defining sets of cyclic codes $\C_{(0,m)}$ and $\C_{(1,m)}$, $T_{(0,m)}$ and $T_{(1,m)}$ partition $\mathbb{Z}_n\backslash\{0\}$. Also from the defining set, we know that $
     g_{\left( i,m \right)}\left( 1 \right) \ne 0 $.
    It then follows that $\C_{(0,m)}$ and $\C_{(1,m)}$ form a pair of duadic codes with the same parameters.
\end{proof}

\begin{mythm}
	If $m$ is odd and $m\geqslant 5$, then the cyclic codes $\C_{(0,m)}$ and $\C_{(1,m)}$ have the parameter $
	\left[ 4^{m}-1,2^{2m-1},d \right] $, where
$$
d\geqslant \begin{cases}
	4^{\frac{m-1}{2}}+1~~~~ {\rm{if}}~m\equiv 1 \pmod {4},\\
	4^{\frac{m-3}{2}}+1~~~~{\rm{if}}~m\equiv 3 \pmod {4}.\\
\end{cases}
$$

\end{mythm}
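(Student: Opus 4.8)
The plan is to separate the dimension claim from the distance claim. The dimension is already settled: since $m$ is odd, Theorem~\ref{m is odd,dimension} gives $\dim\C_{(0,m)}=\dim\C_{(1,m)}=2^{2m-1}$, so the length and dimension in the stated parameters need no further argument. Everything below is devoted to the lower bound on $d$.

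For the distance, the strategy is to produce a long run of consecutive elements in the defining set of a code \emph{equivalent} to $\C_{(0,m)}$ and then apply the BCH bound. The auxiliary results of Section~4 supply exactly the ingredient we need in dilated form: by Lemma~\ref{m mod 4 =1} we have $v\cdot\{1,\dots,N\}\subseteq T_{(0,m)}$ with $v=4^{(m-1)/2}+1$ and $N=4^{(m-1)/2}$ when $m\equiv1\pmod4$, and by Lemma~\ref{m mod 4 =3} the same containment holds with $v=4^{(m+1)/2}+1$ and $N=4^{(m-3)/2}$ when $m\equiv3\pmod4$. In both cases Lemma~\ref{gcd(v,n)=1} guarantees $\gcd(v,n)=1$, so $v$ is a unit modulo $n$. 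First I would invoke the multiplier equivalence of Section~2 with $a=v$: multiplying the containment through by $v^{-1}$ shows that the defining set $v^{-1}T_{(0,m)}$ of the equivalent code $\C_{\mu v}$ contains $\{1,2,\dots,N\}$, which is a block of $N$ genuinely consecutive residues since $N<n$. The BCH bound then forces $d(\C_{\mu v})\geqslant N+1$, and as equivalence preserves the weight distribution we conclude $d(\C_{(0,m)})=d(\C_{\mu v})\geqslant N+1$; substituting the two values of $N$ reproduces the case split in the statement.

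To pass from $\C_{(0,m)}$ to $\C_{(1,m)}$ I would use Theorem~\ref{m is odd,  a pair of odd-like duadic codes}, which exhibits the two codes as a pair of odd-like duadic codes whose splitting is given by $\mu_{-1}$; in particular $T_{(1,m)}=-T_{(0,m)}$, so $\C_{(1,m)}$ is the image of $\C_{(0,m)}$ under the multiplier $\mu_{-1}$ and hence equivalent to it. The identical lower bound therefore transfers verbatim to $\C_{(1,m)}$.

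I do not expect the main difficulty to lie in this theorem at all: the genuine work has already been done in Section~4, where the identity $w_4(av)=2w_4(a)$ (valid because multiplication by $v=4^s+1$ places two copies of the base-$4$ digits of $a$ in disjoint positions) forces $av$ into $T_{(0,m)}$. At the level of the present statement the only point demanding care is the translation step: confirming that applying $v^{-1}$ to the dilated block yields precisely the consecutive block $\{1,\dots,N\}$ with no wrap-around modulo $n$, so that the BCH hypothesis on consecutive elements is legitimately met. Once that bookkeeping is in place the theorem follows by assembling Theorem~\ref{m is odd,dimension}, the Section~4 lemmas, the multiplier equivalence, and the BCH bound.
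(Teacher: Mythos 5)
Your proposal is correct and takes essentially the same approach as the paper: the dimension comes from the earlier dimension theorem, the Section~4 lemmas together with the gcd lemma supply the dilated block $v\cdot\{1,\dots,N\}\subseteq T_{(0,m)}$, the BCH bound is applied once the dilation is undone, and the $\mu_{-1}$ duadic symmetry transfers the bound to $\C_{(1,m)}$. The only difference is presentational: the paper undoes the dilation by re-indexing the defining set with respect to a new primitive root $\gamma$ (written there as $\alpha^{v^{-1}}$, though for the defining set to contain the consecutive block $\{1,\dots,N\}$ the root should be $\alpha^{v}$), whereas you invoke the explicit multiplier equivalence $\C_{\mu v}$ from Section~2 and transfer the distance by equivalence --- two phrasings of the identical argument, with yours stated, if anything, more carefully.
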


\begin{proof}
	By Theorem \ref{m is odd,dimension}, we already know the dimension of cyclic codes. We now prove the lower bounds on the minimum distance
	of the code $\C_{(0,m)}$ and $\C_{(1,m)}$. Since they are a pair of duadic codes, so we only need to get the parameters of one of the two.
	
	In the case that $m\equiv 1 \pmod{4} $ and $m\geqslant 5$, let $v=4^{\frac{m-1}{2}}+1$. It follows by Lemma \ref{m mod 4 =1} that $\gcd(v, n)=1$.
	Let $v^{-1}$ be the multiplicative inverse of $v$ modulo $n$ and let
	$\gamma =\alpha v^{-1}$. Then $\gamma $ is also an $n$-th primitive root of unity.
	It follows again by Lemma \ref{m mod 4 =1} that the defining set of
	$\C_{(0,m)}$ with respect to $\gamma $ contains the set of consecutive indices $
	\left\{ 1,2,...,4^{\frac{m-1}{2}} \right\}
	$.	The desired lower bound on $d$ then follows by the BCH
	bound on cyclic codes.
	
In the case that $m\equiv 3 \pmod{4} $ and $m\geqslant 7$, let $v=4^{\frac{m+1}{2}}+1$. It follows by Lemma \ref{m mod 4 =3} that $\gcd(v, n)=1$.
Let $v^{-1}$ be the multiplicative inverse of $v$ modulo $n$ and let
$\gamma =\alpha v^{-1}$. Then $\gamma $ is also an $n$-th primitive root of unity.
It follows again by Lemma \ref{m mod 4 =3} that the defining set of
$\C_{(0,m)}$ with respect to $\gamma $ contains the set of consecutive indices $
\left\{ 1,2,...,4^{\frac{m-3}{2}} \right\}
$.	The desired lower bound on $d$ then follows from the BCH
bound on the cyclic codes.\end{proof}
	
Note that asymptotically on $m$ our bound is of order $\frac{\sqrt{n}}{2}$ (when $m \equiv 1 \pmod{4}$) or $\frac{\sqrt{n}}{2\sqrt{2}}$ (when $m \equiv 3 \pmod{4}$) while the bound of Fact 2.5 is of order $\sqrt{n}.$ However, for general duadic codes, the odd-like distance does not coincide with the minimum distance.

\subsection {The Case that $m$ is Even}

\newtheorem{lem}{Lemma}[subsection]
\begin{mythm}\label{m even diemension}
	If $m$ is even, then the dimension for $\C_{(0,m)}$ is $2^{2m-1}+1$ and  the dimension for $\C_{(1,m)}$ is $2^{2m-1}-1$.
\end{mythm}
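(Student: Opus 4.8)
The plan is to use the standard fact that a cyclic code $\C$ of length $n$ with defining set $T$ has dimension $n-|T|$, since its generator polynomial $g(x)=\prod_{i\in T}(x-\alpha^i)$ has degree exactly $|T|$. First I would invoke Lemma~\ref{m is even ,the size of defining set}, which supplies, for even $m$, the sizes $|T_{(0,m)}|=2^{2m-1}-2$ and $|T_{(1,m)}|=2^{2m-1}$. The two dimensions then follow by the single substitution $\dim\C_{(i,m)}=n-|T_{(i,m)}|$ with $n=4^m-1$, so the entire theorem reduces to two elementary simplifications.

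The only computation is the cleanup of $4^m-1-|T|$, and the key arithmetic identity is $4^m=2^{2m}=2\cdot 2^{2m-1}$, whence $4^m-2^{2m-1}=2^{2m-1}$. For $\C_{(0,m)}$ this gives
\[
\dim\C_{(0,m)}=(4^m-1)-(2^{2m-1}-2)=4^m+1-2^{2m-1}=2^{2m-1}+1,
\]
and for $\C_{(1,m)}$,
\[
\dim\C_{(1,m)}=(4^m-1)-2^{2m-1}=2^{2m-1}-1,
\]
matching the claimed values.

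There is no substantive obstacle here: all the real work already sits inside Lemma~\ref{m is even ,the size of defining set}, where the exclusion of $n$ from $T_{(0,m)}$ (because $w_4(n)=3m$ is even when $m$ is even, so $n$ would otherwise lie in $T_{(0,m)}$) is exactly what produces the $\pm 1$ asymmetry between the two dimensions. The one point deserving a line of care is that the formula $\dim=n-|T|$ presupposes $g_{(i,m)}(x)\in\F_4[x]$ with degree equal to $|T_{(i,m)}|$, which was already noted via the Frobenius-invariance remark when the codes were defined; granting that, the theorem is immediate.
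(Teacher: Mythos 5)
Your proposal is correct and follows exactly the paper's own argument: invoke Lemma~\ref{m is even ,the size of defining set} for the sizes of the defining sets and then compute $\dim \C_{(i,m)} = n - |T_{(i,m)}|$. The only difference is that you spell out the arithmetic $4^m-1-(2^{2m-1}-2)=2^{2m-1}+1$ and the degree/Frobenius justification explicitly, which the paper leaves implicit.
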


\begin{proof}
	By Lemma \ref{m is even ,the size of defining set}, the size of the defining set follows. Then the dimension of this cyclic code is equal to
	$n-\left| T_{\left( i,m \right)} \right|$.
\end{proof}

\begin{mythm}
	If $m$ is even and $m\geqslant 8$, then the cyclic code  $\C_{\left( 0,m \right)}$ have the parameter $\left[ 4^{m}-1,2^{2m-1}+1,d \right]$, where
$$
d\geqslant \begin{cases}
	4^{\frac{m-4}{2}}+1~~~~{\rm{if}}~m\equiv 2 \pmod{4},\\
	4^{\frac{m-4}{2}}+1~~~~{\rm{if}}~m\equiv 4 \pmod{8}. \\
\end{cases}
$$

\end{mythm}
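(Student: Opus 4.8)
The plan is to read the dimension off the earlier counting lemma and then, after an invertible relabelling of the indices, to exhibit a block of $4^{(m-4)/2}$ consecutive elements inside the defining set, so that the BCH bound delivers the stated distance. The two congruence classes of $m$ are handled in parallel: they call on different auxiliary lemmas, but both furnish an arithmetic progression of the same length, which is exactly why the two branches of the bound coincide.

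First I would dispose of the dimension. Since $m$ is even, Theorem \ref{m even diemension} gives $\dim \C_{(0,m)}=2^{2m-1}+1$, which accounts for the first two entries of the parameter triple; the only thing left is the lower bound on $d$. For the distance I would split into the two cases. When $m\equiv 2\pmod 4$, set $v=4^{(m+2)/2}+1$ and invoke Lemma \ref{m mod 4 =2}; when $m\equiv 4\pmod 8$, set $v=4^{(m-4)/2}+1$ and invoke Lemma \ref{m=0(mod4)}. In either case the cited lemma provides, through Lemma \ref{gcd(v,n)=1}, both that $\gcd(v,n)=1$ and that $\{\,av:1\le a\le 4^{(m-4)/2}\,\}\subseteq T_{(0,m)}$.

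The remaining step is the relabelling. Because $v$ is invertible modulo $n$, the multiplier $\mu_v$ carries $\C_{(0,m)}$ to an equivalent cyclic code whose defining set is $v^{-1}T_{(0,m)}$; and since $av\in T_{(0,m)}$ for every $1\le a\le 4^{(m-4)/2}$, this relabelled defining set contains the consecutive block $\{1,2,\dots,4^{(m-4)/2}\}$. Applying the BCH bound to the equivalent code then gives minimum distance at least $4^{(m-4)/2}+1$, and since equivalence preserves the minimum distance, the same bound holds for $\C_{(0,m)}$, completing both cases at once.

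The genuinely delicate arithmetic — checking $\gcd(v,n)=1$ and verifying that the entire progression $\{av\}$ has even $4$-weight — has already been absorbed into Lemmas \ref{gcd(v,n)=1}, \ref{m mod 4 =2} and \ref{m=0(mod4)}, so little remains beyond bookkeeping. The one point that must be handled with care is the direction of the relabelling: it is $v^{-1}T_{(0,m)}$, not $vT_{(0,m)}$, that serves as the defining set of the $\mu_v$-image, and it is precisely this inverse that collapses the progression $\{av\}$ into the consecutive run $\{1,\dots,4^{(m-4)/2}\}$. One should also note that this run starts at $1$, so it avoids $0$ and the BCH bound applies directly with $\delta-1=4^{(m-4)/2}$, returning $\delta=4^{(m-4)/2}+1$.
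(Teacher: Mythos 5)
Your proposal is correct and follows essentially the same route as the paper: the dimension from Theorem \ref{m even diemension}, the same choices $v=4^{(m+2)/2}+1$ (via Lemma \ref{m mod 4 =2}) and $v=4^{(m-4)/2}+1$ (via Lemma \ref{m=0(mod4)}), and the BCH bound after re-indexing by the invertible multiplier. The only cosmetic difference is that you phrase the relabelling through the equivalence $\C_{\mu v}$ with defining set $v^{-1}T_{(0,m)}$, whereas the paper speaks of the defining set with respect to a new primitive root $\gamma$; these are the same mechanism, and your explicit care about the direction $v^{-1}T_{(0,m)}$ versus $vT_{(0,m)}$ is if anything stated more precisely than in the paper.
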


\begin{proof}
		From Theorem \ref{m even diemension}, we already know the dimension of this cyclic code. We now prove the lower bounds on the minimum distance
		of the code $\C_{(0,m)}$. In the case that $m\equiv 2 \pmod{4} $ and $m\geqslant 8$, let $v=4^{\frac{m+2}{2}}+1$. It follows by Lemma
		\ref{m mod 4 =2} that $\gcd(v, n)=1$.
		Let $v^{-1}$ be the multiplicative inverse of $v$ modulo $n$ and let
		$\gamma =\alpha v^{-1}$. Then $\gamma $ is also an $n$-th primitive root of unity.
		It follows again by Lemma \ref{m mod 4 =2} that the defining set of $\C_{(0,m)}$ with respect to $\gamma $ contains the set of consecutive indices  $
		\left\{ 1,2,...,4^{\frac{m-4}{2}} \right\}
		$.
		In the case that $m\equiv 4 \pmod{8}$, it follows by Lemma
		\ref{m=0(mod4)} that $\gcd(v, n)=1$.
		Let $v^{-1}$ be the multiplicative inverse of $v$ modulo $n$ and let
		$\gamma =\alpha v^{-1}$. Then $\gamma $ is also an $n$-th primitive root of unity.
		It follows again by Lemma \ref{m=0(mod4)} that the defining set of $\C_{(0,m)}$ with respect to $\gamma $ contains the set $
		\left\{ 1,2,...,4^{\frac{m-4}{2}} \right\}
		$. The desired lower bound on $d$ follows then by the BCH
		bound on the cyclic codes.\end{proof}

Note that asymptotically on $m$ our bound is of order $\frac{\sqrt{n}}{4}.$ When referring to the parameter of $\C_{(1,m)}$ when $m$ is even. We found some relationship between 2-weights in \cite{TD} and 4-weights in our paper.

\begin{lem} \label{W2 and W4}

\begin{equation}
	w_4\left( 2a \right) \equiv w_4\left( a \right) \pmod {2}
	\Longleftrightarrow w_2\left( a \right) \equiv 0 \pmod {2}    \tag{1}
\end{equation}
\begin{equation}
	w_4\left( 2a \right) \equiv w_4\left( a \right) +1 \pmod {2} \Longleftrightarrow w_2\left( a \right) \equiv 1 \pmod {2}    \tag{2}
\end{equation}

\end{lem}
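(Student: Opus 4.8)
The plan is to reduce both equivalences to a single clean parity identity, namely
$$w_4(2a) \equiv w_4(a) + w_2(a) \pmod 2,$$
and then read off (1) and (2) as the two cases $w_2(a)$ even and $w_2(a)$ odd. Since $w_4(2a)-w_4(a)$ is either $\equiv 0$ or $\equiv 1 \pmod 2$, this single identity is logically equivalent to the conjunction of the two stated biconditionals, so proving it suffices.

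To prove the identity I would pass from the base-$4$ expansion to the binary expansion. Writing $a=\sum_{k\geq 0} b_k 2^k$ with $b_k\in\{0,1\}$, the $j$-th base-$4$ digit of $a$ is $b_{2j}+2b_{2j+1}$, so that $w_4(a)=\sum_j (b_{2j}+2b_{2j+1})\equiv \sum_j b_{2j}\pmod 2$. In words, the parity of $w_4(a)$ is exactly the parity of the number of $1$'s occurring in the even-indexed positions of the binary expansion of $a$. This is the key structural observation and the first step I would record.

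Next I would use the fact that multiplication by $2$ is simply a left shift of the binary expansion with no carries: the digits of $2a$ are $b'_k=b_{k-1}$ with the convention $b_{-1}=0$. Applying the previous observation to $2a$ gives $w_4(2a)\equiv \sum_j b'_{2j}=\sum_{j\geq 1} b_{2j-1}=\sum_{j\geq 0} b_{2j+1}\pmod 2$, i.e. the parity of $w_4(2a)$ equals the parity of the number of $1$'s in the \emph{odd}-indexed positions of $a$. Adding the two congruences and using $w_4(a)\equiv -w_4(a)\pmod 2$ yields $w_4(2a)+w_4(a)\equiv \sum_j b_{2j}+\sum_j b_{2j+1}=\sum_k b_k = w_2(a)\pmod 2$, which is precisely the desired identity.

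The argument involves no genuine obstacle, the crucial point being that multiplication by $2$ produces no carries in base $2$; the only thing requiring a moment's care is the bookkeeping of which binary positions feed the even-indexed base-$4$ slots before and after the shift, which is exactly why separating even- and odd-indexed binary digits at the outset is the right move. One further remark I would add concerns the intended application to defining sets modulo $n=4^m-1$: although $2a$ may exceed $n$, at most one subtraction of $n$ is ever needed, and because $2a$ is even, replacing $2a$ by $2a-(4^m-1)$ deletes the single leading base-$4$ digit equal to $1$ while adding $1$ to a trailing digit lying in $\{0,2\}$ (hence no carry), so the two changes cancel and $w_4$ is left unchanged. Thus the identity transfers verbatim to the reduced representatives used in the cyclotomic cosets.
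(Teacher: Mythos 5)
Your proof is correct and follows essentially the same route as the paper: both arguments split each base-$4$ digit of $a$ into its two binary bits (equivalently, separate the even- and odd-indexed binary digits), observe that doubling produces no carries, and compare the resulting parities of $w_4(a)$ and $w_4(2a)$. Your packaging of the two biconditionals into the single identity $w_4(2a)\equiv w_4(a)+w_2(a)\pmod{2}$ is a cleaner formulation of what the paper computes, and your closing remark about reduction modulo $4^m-1$ is a correct (if extraneous) bonus.
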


\begin{proof}
	First let us recall the definition of 2-weights and 4-weights. For a positive integer $a$, if we write $a=\sum_{i=0}^{t}{b_i4^i}$
	or $a=\sum_{i=0}^{s}{c_i2^i}.$ The 4-weights of $a$ means $w_4=\sum_{i=0}^{t}{b_i}$, and 2-weights of $a$ means $w_2=\sum_{i=0}^{s}{c_i}$.
	
	We will show the Equation (1), and the Equation (2) is similar. For any positive integer $a$, since $b_i \in \Z_4$, so we can write $b_i=2d_i+e_i$, where $d_i,e_i\in\Z_2$.
	\begin{align}
	   a =\sum_{i=0}^t{b_i4^i}   \notag   \notag
		 =\sum_{i=0}^t{b_i2^{2i}}   \notag
		 =\sum_{i=0}^t{\left( 2d_i+e_i \right) 2^{2i}}   \notag
		 =\sum_{i=0}^t{2d_i2^{2i}}+e_i2^{2i}, w_4\left( 2a \right) =\sum_{i=0}^t{\left( d_i+2e_i \right)} \end{align}and $
w_4\left( a \right) =\sum_{i=0}^t{\left( 2d_i+e_i \right)}.
$ Then it is clearly that  $$
w_2\left( a \right) \equiv 0\left( \mathrm{mod}\;2 \right) \Leftrightarrow \sum_{i=0}^t{e_i}=\sum_{i=0}^t{d_i}\Longleftrightarrow w_2\left( a \right) \equiv w_4\left( a \right) \left( \mathrm{mod}\;2 \right).
$$
This completes the proof.\end{proof}
	Getting suitable $v$ directly is very hard, we must ensure that when $m$ is even, $\gcd(v,n)=1$ and for some  consecutive integer $a$ we have $\left\{ av: y\leqslant a\leqslant x \right\} \subseteq T_{\left( 1,m \right)}$. In the meanwhile, for the convenience of calculating $av$ modulo $n$, we want $xv<n$. By applying Lemma \ref{W2 and W4}, maybe we can simplify this question.
	In order to distinguish the different defining sets, we let
	$$
	T_{2\left( 0,m \right)}=\left\{ 1\leqslant j\leqslant n-1: w_2\left( j \right)~\rm{is}~even \right\},
	$$
	 $$T_{2\left( 1,m \right)}=\left\{ 1\leqslant j\leqslant n-1: w_2\left( j \right)~\rm{is}~odd \right\},$$
	 $$
	 T_{4\left( 0,m \right)}=\left\{ 1\leqslant j\leqslant n-1: w_4\left( j \right)~\rm{is}~even \right\},
	 $$
	 $$
	 T_{4\left( 1,m \right)}=\left\{ 1\leqslant j\leqslant n-1: w_4\left( j \right)~\rm{is}~odd \right\}.
	 $$
From \cite{TD}, we can easily get the suitable $v_1$ and consecutive $a$ such that $\left\{ av_1: y\leqslant a\leqslant x \right\} \subset T_{2\left( 1,m \right)}$. Then a suitable $v_2$ can be obtained easily by applying method in this paper such that $
\left\{ av_1v_2: y\leqslant a\leqslant x \right\} \subset T_{4\left( 0,m \right)}$. If $w_2(av_1v_2)\in T_{2(1,m)}$, from the results of Lemma \ref{W2 and W4}, we have
$$
w_4\left( 2av_1v_2 \right) \equiv w_4\left( av_1v_2 \right) +1 \pmod{2},
$$
so that $\left\{ 2av_1v_2: y\leqslant a\leqslant x \right\} \subset T_{4\left( 1,m \right)}.$

Hence, the following open problem is very
natural.
{\bf Open Problem 1:} By applying lemma above, how to estimate the parameters of the cyclic code $\C_{(1,m)}$ when $m$ is even?

Here  is a partial answer to that question.

\begin {mythm}  \label{partial proof}
 Suppose $m$ is even, and $ m\equiv 14 \pmod{16}$, $ m\geqslant 30$. Let the parameter of $\C_{(1,m)}$ be $[n,k,d]$, then $d\geqslant2^{\frac{m-14}{4}}+1$.

\end{mythm}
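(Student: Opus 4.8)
The plan is to produce a single multiplier $w$ that is invertible modulo $n=4^m-1$, together with a block of $2^{(m-14)/4}$ consecutive integers whose images under multiplication by $w$ all land in the defining set $T_{(1,m)}=T_{4(1,m)}$; the bound then falls out of the BCH bound after passing to an equivalent code. Concretely I would look for $w$ and integers $y<x$ with $x-y+1=2^{(m-14)/4}$ such that $wa \bmod n\in T_{(1,m)}$ for every $a\in\{y,y+1,\dots,x\}$. Since the multiplier $\mu_w$ carries the defining set $T_{(1,m)}$ of $\C_{(1,m)}$ (with respect to $\alpha$) to $w^{-1}T_{(1,m)}$ (with respect to $\gamma=\alpha w^{-1}$), the block $\{y,\dots,x\}$ then sits inside the defining set of the equivalent code $\C_{\mu_w}$, yielding $2^{(m-14)/4}$ consecutive zeros and hence $d\geq 2^{(m-14)/4}+1$.

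To build $w$ I would route the parity bookkeeping through Lemma \ref{W2 and W4}. Writing $w=2v$, the requirement that $w_4(2va)$ be odd is, by part (2) of Lemma \ref{W2 and W4}, equivalent to the two conditions $w_2(va)\equiv 1\pmod 2$ and $w_4(va)\equiv 0\pmod 2$ holding simultaneously along the block. I would then take $v=v_1v_2$, using a binary multiplier $v_1=2^{s_1}+1$ to control the $2$-weight parity of $va$ through the construction of \cite{TD} applied to the binary length $n=2^{2m}-1$, and the quaternary multiplier $v_2=4^{s_2}+1$ of the type analysed in Lemmas \ref{m=0(mod4)}--\ref{m mod 4 =3} (here the case $m\equiv 2\pmod 4$ of Lemma \ref{m mod 4 =2}) to force $w_4(va)$ even. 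Coprimality $\gcd(w,n)=1$ reduces, since $n$ is odd, to $\gcd(v_1,n)=\gcd(v_2,n)=1$, both of which follow from Lemma \ref{gcd(v,n)=1} once $s_1,s_2$ are chosen so that the relevant ratios $l/\gcd(m,l)$ are odd; the congruence $m\equiv 14\pmod{16}$ with $m\geq 30$ is exactly what makes such exponents available at once.

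The genuine work is verifying that both parities survive across the whole block. Multiplying $a$ by $4^{s_2}+1$ with no carries doubles both $w_2$ and $w_4$, so that correction alone can only make $w_2(va)$ even; conversely, the binary correction that forces the $2$-weight odd perturbs the base-$4$ digit pattern. I would therefore choose $s_1,s_2$ and the endpoints $y,x$ so that the shifted copies of $a$ occurring in $v_1v_2a$ occupy disjoint base-$2$ and base-$4$ digit windows except for one controlled interaction that flips the $2$-weight parity while preserving the $4$-weight parity, and then read off both weights digit by digit as in the no-carry computations of Lemmas \ref{m mod 4 =1}--\ref{m mod 4 =2}. The arithmetic constraint that all these windows fit inside the $m$ available base-$4$ digits (an $s_1+s_2+x<m$ type inequality), together with the coprimality requirement, is what simultaneously pins the residue class to $m\equiv 14\pmod{16}$ and shrinks the usable block to length $2^{(m-14)/4}$, far below the square-root bound, which is why this is only a partial answer.

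The main obstacle I anticipate is precisely this simultaneous control of the $2$-adic and $4$-adic weight parities of $va$ over a full run of consecutive $a$: the two multiplicative corrections interfere through carries, and preventing them from cancelling is what both restricts $m$ to the single residue class and forces the short run. Once the digit windows are separated correctly, the remaining steps---coprimality via Lemma \ref{gcd(v,n)=1}, the parity conversion via Lemma \ref{W2 and W4}, and the concluding BCH estimate on $\C_{\mu_w}$---are routine.
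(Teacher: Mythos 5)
Your overall architecture coincides with the paper's: write the multiplier as $2v_1v_2$, use Lemma \ref{W2 and W4} to reduce to the two conditions ``$w_2(v_1v_2a)$ odd and $w_4(v_1v_2a)$ even'' along the run, get coprimality from Lemma \ref{gcd(v,n)=1}, and finish with the BCH bound on an equivalent code. (One small slip: this pair of conditions is \emph{sufficient} for $w_4(2v_1v_2a)$ to be odd, not equivalent to it, since $w_2$ even together with $w_4$ odd also works; but sufficiency is all you need.) The genuine gap is that you never exhibit multipliers for which the two parity conditions actually hold, and the shape you propose for the binary factor, $v_1=2^{s_1}+1$, is of the wrong type. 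As you yourself observe, a ``plus-type'' multiplier acting without carries doubles the $2$-weight, so it can only produce \emph{even} $2$-weight parity; your escape route --- ``one controlled interaction that flips the $2$-weight parity while preserving the $4$-weight parity,'' uniformly over a whole block of consecutive $a$ --- is precisely the hard core of the theorem, and it is left entirely unexecuted. Nothing in the proposal indicates that such carry engineering is possible with plus-type factors, and your attribution of an odd-parity mechanism of this shape to \cite{TD} is a misreading: there, plus-type multipliers serve the even-weight defining sets.

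The paper's proof instead takes $v_1=2^{\frac{m-4}{2}}-1$ and $v_2=4^{\frac{m+2}{8}}+1$ (the hypothesis $m\equiv 14\pmod{16}$ is exactly what makes $\frac{m+2}{8}$ an integer and $\frac{m-4}{2}$ odd), with the run $1\leqslant a\leqslant 2^{\frac{m-14}{4}}$. The minus-type factor is the mechanism you are missing: for $1\leqslant a\leqslant 2^{s}$ one has $a(2^{s}-1)=(a-1)2^{s}+(2^{s}-a)$, and $2^{s}-a=(2^{s}-1)-(a-1)$ is the $s$-bit complement of $a-1$, so
\begin{equation}
w_2\bigl(a(2^{s}-1)\bigr)=w_2(a-1)+\bigl(s-w_2(a-1)\bigr)=s, \notag
\end{equation}
a \emph{constant}, hence odd for every $a$ in the run once $s=\frac{m-4}{2}$ is odd. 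This constancy is what lets the $2$-adic parity survive across the whole block without any case analysis; no plus-type multiplier has an analogous property. Granted, the paper's own write-up is terse --- it asserts rather than verifies the containments $\{av_1v_2\}\subset T_{4(0,m)}\cap T_{2(1,m)}$, whose check does involve base-$4$ carries --- but it does supply concrete multipliers for which the check can be carried out, whereas your plan stalls exactly at the step that constitutes the theorem.
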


\begin{proof}
	Let $v_1=2^{\frac{m-4}{2}}-1$, $v_2=4^{\frac{m+2}{8}}+1$. In this case, $\gcd\left( n_4,v_1 \right) =\gcd\left( n_4,v_2 \right) =1$. We have
	$$
	\left\{ av_1v_2:1\leqslant a\leqslant 2^{\frac{m-14}{4}} \right\} \subset T_{4\left( 0,m \right)},
	$$
	$$
	\left\{ av_1v_2:1\leqslant a\leqslant 2^{\frac{m-14}{4}} \right\} \subset T_{2\left( 1,m \right)}.
	$$
	Since $
	w_2\left( av_1v_2 \right) \equiv 1 \pmod{2}
	$, so $$
	w_4\left( 2av_1v_2 \right) \equiv w_4\left( av_1v_2 \right) +1 \pmod {2}.$$ Therefore,  $
	\left\{ 2av_1v_2:1\leqslant a\leqslant 2^{\frac{m-14}{4}} \right\} \subset T_{4\left( 1,m \right)}.
	$\end{proof}

\section{Two Dual Cyclic Codes}
	\newtheorem{mythm2}{Theorem}[section]
   According to the defining set, we note that $T_{(0,m)}$ and $T_{(1,m)}$  partition ${\mathbb{Z}}_{n}\backslash\{0\}$. So it is very clear that
\begin{enumerate}
	\item [(1)] If $m$ is even, then we have $\C_{\left( 0,m \right)}^{\bot}$ is the even-weight subcode of $\C_{(1,m)}$, $\C_{\left( 1,m \right)}^{\bot}$ is the even-weight subcode of $\C_{(0,m)}$.
	
	\item [(2)] If $m$ is odd, then we have $\C_{\left( 0,m \right)}^{\bot}$ is the even-weight subcode of $\C_{(0,m)}$, $\C_{\left( 1,m \right)}^{\bot}$ is the even-weight subcode of $\C_{(1,m)}$.
	\end{enumerate}
Therefore, we have the following theorems:
\begin{mythm2}
	If $m$ is odd and $m\geqslant 3$, then $\C_{\left( 0,m \right)}^{\bot}$ and $\C_{\left( 1,m \right)}^{\bot}$ form a pair of even-like duadic codes with parameters $\left[ 4^m-1,2^{2m-1}-1,d \right] $, where
	$$
	d\geqslant \begin{cases}
		4^{\frac{m-1}{2}}+2~~~~{\rm{if} }~m\equiv 1 \pmod {4},\\
		4^{\frac{m-3}{2}}+2~~~~{\rm{if}}~m\equiv 3 \pmod {4}.\\
	\end{cases}
	$$
	
\end{mythm2}

\begin{mythm2}
		If $m$ is even and $m\geqslant 6$, then the cyclic code  $\C_{\left( 1,m \right)}^{\bot}$ have the parameter $\left[ 4^{m}-1,2^{2m-1},d \right]$, where
$$
d\geqslant \begin{cases}
	4^{\frac{m-4}{2}}+2~~~~{\rm{if}}~m\equiv 2 \pmod{4},\\
	4^{\frac{m-4}{2}}+2~~~~{\rm{if}}~m\equiv 4 \pmod{8}. \\
\end{cases}
$$
	
\end{mythm2}

\begin{mythm2}
	If $m$ is even, then the cyclic codes $\C_{(0,m)}$ and $\C_{(1,m)}$ are both LCD codes.
	\end{mythm2}
\begin{proof}
	Let $\alpha$ be a generator of $GF(4^m)^*$.
	Let $i=0$ or $1$. Then  $\C_{(i,m)}$ has generator polynomial
	$$
	g_{\left( i,m \right)}\left( x \right) =\prod_{\begin{array}{c}
			1\leqslant j\leqslant n-1\\
			w_4\left( j \right) \equiv i\pmod{2}\\
	\end{array}}^{}{\left( x-\alpha ^j \right).}
	$$
Then $\C_{\left( 0,m \right)}\cap \C_{\left( 1,m \right)}$ has generator polynomial $$
g\left( x \right) =lcm\left\{ g_{\left( 0,m \right)}\left( x \right) ,g_{\left( 1,m \right)}\left( x \right) \right\} =\frac{x^n-1}{x-1}=x^{n-1}+x^{n-2}+\cdots +x+1.
$$
Therefore, $\C_{\left( 0,m \right)}\cap \C_{\left( 1,m \right)}$ is a repetition code $<\bf 1>$. Since when $m$ is even, $\C_{\left( 0,m \right)}^{\bot}$ is a subcode of $\C_{(1,m)}$, so we have
$$
\C_{\left( 0,m \right)}\cap \C_{\left( 0,m \right)}^{\bot}\subset \C_{\left( 0,m \right)}\cap \C_{\left( 1,m \right)}=<\bf 1>.
$$
Also we notice that $\C_{\left( 0,m \right)}^{\bot}$ is even-weight, so we obtain $$
\C_{\left( 0,m \right)}\cap \C_{\left( 0,m \right)}^{\bot}=\left\{ 0 \right\}.
$$
Therefore, $\C_{(0,m)}$ is a LCD code and the same condition is true for $\C_{(1,m)}$.\end{proof}

\begin{ex}
	Let $m=2$. Then $\C_{(0,m)}$ has parameters $[15,9,3]$ and $\C_{\left( 0,m \right)}^{\bot}$ has parameters $[15,6,6]$. The code $\C_{(1,m)}$ has parameters $[15,7,5]$ and $\C_{\left( 1,m \right)}^{\bot}$ has parameters $[15,8,4]$. In this case, the cyclic codes $\C_{(0,m)}$ and $\C_{(1,m)}$ are both LCD codes.\end{ex}

\section{Extension Codes and  Subfield Subcodes}
 \subsection{Extension Codes}
 	\newtheorem{mythm3}{Theorem}[subsection]
 	\newtheorem{mylem3}{Lemma}[subsection]
\begin{mythm3}(\cite[Theorem 6.4.12]{HufPle})
	Let $\D_{1}$ and $\D_{2}$ be a pair of odd-like duadic codes of length $n$ over $\mathbb{F}_q$. Let $\overline{\D_i}$
	 stand for the extension code of
	$\D_{i}$  for each $i$.
	 Assume that there is a solution $\gamma\in\mathbb{F}_q$ to the equation
	$$
	1+\gamma ^2n=0,
	$$
	then the following hold:
	\begin{enumerate}
		\item [(1)] If $\mu _{-1}$ gives the splitting for $\D_{1}$ and $\D_{2}$, then
		$\overline{\D_1}$ and $ \overline{\D_2}$ are self-dual.
		\item [(2)] If $\D_1\mu _{-1}=\D_2$, then $\overline{\D_1}$ and $ \overline{\D_2}$ are duals of each other.
		\end{enumerate}
	
   \end{mythm3}

When $m$ is odd, take $\gamma=1\in\mathbb{F}_4$, then we know that the extension code of $\C_{(i,m)}$ is self-dual for each $i$. Now we consider the type of extension codes.

\begin{mylem3}(\cite[Proposition 3.1]{type II}) If $\C$ is self-orthogonal so is
	$\phi\left( \C\right) $. In this case $\phi\left( \C\right) $ is
	a Type I (resp. Type II) code if and only if  $\C$ is a Type \Rmnum{1} (resp. Type II) code.
	
\end{mylem3}

\begin{mylem3}(\cite[Theorem 4.3]{type II}) Let $\C$ be an odd-like $Q$-code over $\mathbb{F}_4$ of length $n\equiv 3$ modulo 4, with splitting given by $\mu_{-1}$. Let $\bar{\C}$ be the extended code of $\C$. Then the Gray image of $\bar{\C}$ is Type II.

\end{mylem3}

\begin {mylem3}(\cite[Theorem 16]{Q-codes})
   Every self-dual or strictly self-dual extended cyclic quaternary code is an extended Q-code.
\end{mylem3}
By applying the three above lemmas, we can directly get that:
\begin{mythm3}
	If $m$ is odd, then the extension codes  $\overline{\C_{\left( 0,m \right)}}$ and $\overline{\C_{\left( 1,m \right)}}$ of cyclic codes $\C_{(0,m)}$ and $\C_{(1,m)}$ are Type II self-dual.
\end{mythm3}

\subsection{Subfield Subcode and Trace Code}

\begin{mythm3}(\cite[Theorem 6.3.5 Subfield subcode]{Ling san})
	Let $\C$ be an $[N,K,D]$-linear code over $\mathbb{F}_{q^m}$. Then the subfield subcode $\C|\mathbb{F}_q:=\C\cap \mathbb{F}_{q}^{N}$ is an $[n,k,d]$-linear code over $\mathbb{F}_q$ with $n=N$, $
	k\geqslant mK-\left( m-1 \right) N$ and $d\geqslant D$. Moreover, an
	$[N,mK-(m-1)N,D]$-linear code over $\mathbb{F}_q$ can be provided that $mK>(m-1)N$.

	\end{mythm3}

	\newtheorem{cor1}{Corollary}[subsection]
\begin{cor1}

   	If $m$ is odd, then a subfield subcode $\C|\mathbb{F}_q:=\C_{(i,m)}\cap \mathbb{F}_{q}^{N}$ is an $[n,k,d]$-linear code over $\mathbb{F}_q$ with $n=4^m-1$, $k\geqslant 1$ and $d\geqslant 4^{\frac{m-3}{2}}+1$ for each $i=0,1$.	
\end{cor1}
It is interesting to compare the parameters of the codes in \cite{TD} and the parameters of subfield subcodes in this paper. Here we list a good example. And we let $\C_{t(0,m)}$ and $\C_{t(1,m)}$ denote the trace codes of $\C_{(0,m)}$ and $\C_{(1,m)}$, respectively. Let $\overline{\C_{t(0,m)}}$ and $\overline{\C_{t(1,m)}}$ denote the extension codes of $\C_{t(0,m)}$ and $\C_{t(1,m)}$, respectively.

	\newtheorem{myex3}{Example}[subsection]
\begin{myex3}\label{example}
	When $m=3$, the parameters of $\C_{t(0,m)}$ and $\C_{t(1,m)}$ are $[63,48,5]$, which is near optimal. And the parameters of $\overline{\C_{t(0,m)}}$ and $\overline{\C_{t(1,m)}}$ are $[64,48,6]$, which is an optimal code. For binary codes in \cite{TD} with length $n=63$, we have the parameters
	$[63,31,6]$ and $[63,33,7]$. From the code table in \cite{code table}, we can see they are not as good as  $\C_{t(0,m)}$ and $\C_{t(1,m)}$.
\end{myex3}

\section{Conclusion}
In this paper, we have introduced a quaternary analogue of the Tang-Ding codes in lengths $4^m-1.$ For $m$ odd or $m \equiv 2 \pmod{4}$ or $m \equiv 4 \pmod{8}$, we have given a lower bound on the minimum distance of order the square root of the length. The main open problem is to understand better the parameters of the code $\C_{(1,m)}$ if $m$ is even. A partial answer to that question is Theorem \ref{partial proof}.
On a more positive side, Example \ref{example} shows that the parameters of the Trace code can outperform those of the Tang-Ding codes. It would therefore be desirable
to derive sharper bounds on the minimum distance of the Trace codes.

\end{document}